\newtheorem{theorem}{$\mathbf{Theorem}$}
\newtheorem{lemma}{$\mathbf{Lemma}$}
\newtheorem{proposition}{Proposition}
\newtheorem{Definition}{Definition}
\newtheorem{remark}{$\mathbf{Remark}$}
\begin{document}

\markboth{Submit to IEEE Trans. Multimedia, vol. XX, no. Y, Month 2015}
{Peng: H-CRAN \ldots}

\title{Energy-Efficient Resource Allocation Optimization for Multimedia Heterogeneous Cloud Radio Access
Networks} \vspace{-5pt}
\author{\IEEEauthorblockN{Mugen~Peng,~\IEEEmembership{Senior Member,~IEEE}, Yuling~Yu, Hongyu~Xiang, and H. Vincent Poor,~\IEEEmembership{Fellow,~IEEE}}
\thanks{Mugen~Peng, Yuling~Yu, and Hongyu~Xiang are with the Key Laboratory of Universal Wireless Communications for Ministry of Education, Beijing University of Posts and Telecommunications, Beijing, China (e-mail: {\tt pmg@bupt.edu.cn, aliceyu1215@gmail.com, xianghongyu88@gmail.com}).}
\thanks{H. V. Poor is with the School of Engineering and Applied Science, Princeton
University, Princeton, NJ, USA (e-mail: {\tt poor@princeton.edu}).}}

\vspace{-5pt}

\maketitle

\vspace{-5pt}

\begin{abstract}
The heterogeneous cloud radio access network (H-CRAN) is a promising
paradigm which incorporates the cloud computing into heterogeneous
networks (HetNets), thereby taking full advantage of cloud radio
access networks (C-RANs) and HetNets. Characterizing the cooperative
beamforming with fronthaul capacity and queue stability constraints
is critical for multimedia applications to improving energy efficiency (EE) in H-CRANs. An
energy-efficient optimization objective function with individual
fronthaul capacity and inter-tier interference constraints is
presented in this paper for queue-aware multimedia H-CRANs. To solve this
non-convex objective function, a stochastic optimization problem
is reformulated by introducing the general Lyapunov optimization
framework. Under the Lyapunov framework, this optimization problem
is equivalent to an optimal network-wide cooperative beamformer
design algorithm with instantaneous power, average power and
inter-tier interference constraints, which can be regarded as the
weighted sum EE maximization problem and solved by a generalized
weighted minimum mean square error approach. The mathematical
analysis and simulation results demonstrate that a tradeoff between
EE and queuing delay can be achieved, and this tradeoff strictly
depends on the fronthaul constraint.
\end{abstract}

\begin{IEEEkeywords}
Heterogeneous cloud radio access networks, multimedia traffic, queue-aware, Lyapunov
optimization.
\end{IEEEkeywords}

\section{Introduction}

With the explosive growth of mobile multimedia traffic demand and number of mobile devices,
the next-generation wireless networks face significant challenges in improving system capacity and
guaranteeing users' quality of service (QoS). Cloud radio access networks (C-RANs) present a promising approach to these
challenges by curtailing both capital and operating expenditures for providing mobile multimedia applications, while providing high
energy-efficiency and capacity \cite{CRAN1}\cite{CRAN}. In C-RANs, the traditional base station (BS)
is decoupled into the distributed remote radio heads (RRHs) and the baseband unit (BBU).
Antennas are equipped with RRHs to transmit/receive radio frequency (RF) signals,
and BBUs are clustered as a BBU pool in a centralized location with aggregating all BS computational
resources, which provides large-scale processing and management functions for the signals transmitted/received from RRHs.
With this architecture, mobile operators can easily expand and upgrade the network by deploying additional
RRHs, and thus the corresponding operational costs can be greatly reduced.

The heterogeneous cloud radio access network (H-CRAN) is regarded as a new paradigm to meet
performance requirements of the fifth generation (5G) cellular system for mobile multimedia applications by incorporating cloud computing
into heterogeneous networks (HetNets)\textcolor[rgb]{1.00,0.00,0.00}{\cite{HCRAN1}\cite{HCRAN2}},
in which the control and user planes are decoupled. The existing macro base station (MBS) that has been deployed in traditional cellular networks is used to alleviate capacity constraints over the fronthaul and provide seamless coverage with QoS guarantees for users. {\color{red}In particular, burst multimedia traffic and real-time multimedia traffic with low-bit transmit rate can be efficiently served by the MBS.} For control signaling and system
data broadcasting at MBSs, it alleviates the capacity and time delay
constraints in the fronthaul links between RRHs and the BBU pool, and allows
RRHs to use sleep mode efficiently to decrease energy consumption. {\color{red}RRHs are preferred to provide both real-time and non-real time multimedia applications with high speed data rates, such as real-time interactive high quality video, delay-tolerant web browsing, non-real time video or massive file download, etc.} With the help of MBSs, RRHs can be used to provide only the high-capacity service and are transparent to the served users. Note that the radio signal processing for all RRHs is executed in the BBU pool, while for the MBS is implemented locally. The inter-tier interference between the BBU pool and the MBS can be mitigated by the distributed coordinated multi-point (CoMP) transmission and reception technique.
Comparing with C-RANs and HetNets, H-CRANs have been demonstrated to significant performance
gains though advanced collaborative signal processing and cooperative radio resource allocation are still
challenging\textcolor[rgb]{1.00,0.00,0.00}{\cite{HCRAN1}}.

Intuitively, cloud computing in the BBU pool based on
large-scale cooperative signal processing can suppress intra-tier
interference and achieve significant cooperative gains in H-CRANs.
{\color{red}
The inter-tier interference to RRHs from the MBS equipped
with multiple antennas can be suppressed by coordinated
scheduling or cooperative multiple-input multiple-output
(MIMO) techniques, which substantially improves the spectral efficiency (SE). For instance, inter-tier interference
can be suppressed by using zero-forcing, which results from the
aggressive spatial multiplexing \cite{peng_IEEEACCESS}.} Such characteristics in
H-CRANs bring challenges to optimize the overall SE or energy
efficiency (EE) because too many factors and challenges must be
jointly considered, such as collaborative signal processing to
suppress intra-tier and inter-tier interference in the physical
(PHY) layer, and cooperative radio resource allocation and queue-aware
packet scheduling in the medium access control (MAC) and upper layers.
In addition, capacity constraints of fronthaul and backhaul links
must be considered as well.

\vspace{-10pt}
\subsection{Related Work}


Much attention has been paid to resource allocation in C-RANs
recently. In\textcolor[rgb]{1.00,0.00,0.00}{\cite{I:A1}}, to
minimize the network power consumption, a greedy RRH on$/$off
selection algorithm has been proposed to maximize the reduction in
the network power consumption at each step.
In\textcolor[rgb]{1.00,0.00,0.00}{\cite{I:A2}}, an antenna selection
scheme that jointly optimizes the antenna selection, regularization
factor and power allocation has been presented to maximize the
averaged weighted sum-rate in large-scale C-RAN downlink systems.
The joint optimization of MIMO and discontinuous transmission (DTX)
with practical implementation constraints has been investigated
in\textcolor[rgb]{1.00,0.00,0.00}{\cite{I:A3}} to improve EE
performance. Meanwhile,\textcolor[rgb]{1.00,0.00,0.00}{\cite{I:A4}} has proposed a joint
cell association and beamformer design algorithm for downlink and
uplink C-RANs. Clearly, these characteristics and achievements to
improve SE and EE performance of C-RANs should be further enhanced
in H-CRANs, in which the cell association with RRH/MBS and the
inter-tier interference should be additionally considered.

Meanwhile, a number of studies have considered the SE and EE optimization
of HetNets, in which radio resource allocation with the
consideration of inter-tier interference is often the primary focus.
In\textcolor[rgb]{1.00,0.00,0.00}{\cite{I:A5}}, an EE optimization
problem with statistical quality of service (QoS) constraints in
orthogonal frequency-division multiple access (OFDMA) systems
has been analyzed, and a subchannel grouping scheme to obtain a
closed form solution has been presented, which is simplified to a
multi-target single-channel optimization problem by using the
channel-matrix singular value decomposition method.
In\textcolor[rgb]{1.00,0.00,0.00}{\cite{I:A6}}, to improve EE in
heterogeneous cognitive femtocell networks, a spectrum sharing and
resource allocation scheme has been formulated as a Stackelberg
game, and a gradient based iterative algorithm has been proposed to
achieve the Stackelberg equilibrium solution.
In\textcolor[rgb]{1.00,0.00,0.00}{\cite{I:A7}}, an energy-efficient
partial spectrum reuse (PSR) scheme has been proposed. Since the
optimal PSR factor, defined as the portion of spectrum reused by
micro cells in two-tier heterogeneous networks, is not in an
explicit form generally, a closed-form limit of the optimal PSR
factor has been derived as the ratio of the user rate requirement
to the entire system spectrum bandwidth. Numerical results showed
that adopting PSR can reduce the network energy consumption by up to
50\% when the transmit power of MBSs is 10dB higher than that of low
power nodes (LPNs). An energy efficient precoding for
coordinated multi-point transmission under constraints of individual
date rate requirements from each user, maximal transmit power of each
base station (BS), and zero-forcing (ZF) has been obtained by
introducing the subspace decomposition method
in\textcolor[rgb]{1.00,0.00,0.00}{\cite{I:A8}}, where the
performance gain of the proposed ZF precoder has been verified by
comparing with several existing optimal linear precoders.

The aforementioned works typically assume that all users are
time-delay insensitive and neglect special QoS requirements for
delay sensitive users, which may suffer serious performance
deterioration caused by the large service delay.
{\color{red}
To minimize the user's queuing time and
achieve a degree of fairness, the authors of{\color{red}~\cite{LZhou}} designed an efficient blind scheduling policy that performs well across magnitudes of fairness, simplicity,
and asymptotic optimality for a relatively general mobile media cloud.
}
In\textcolor[rgb]{1.00,0.00,0.00}{\cite{I:A9}}, a joint power and
rate control algorithm with average delay constraints has been
proposed and solved by a game-theoretic model.
To study energy efficiency-delay tradeoffs in multiple-access networks,
a game-theoretic approach is proposed in\textcolor[rgb]{1.00,0.00,0.00}{\cite{I:V_Poor}},
where each user seeks to choose a transmit power that maximizes its own utility
while satisfying its delay requirements.
To deal with the
co-channel interference problem and the individual statistical delay
QoS guarantee problem, the cross-layer optimization of a two-tier
underlay HetNet has been studied
in\textcolor[rgb]{1.00,0.00,0.00}{\cite{V:taomeixia}} using
large deviation theory, in which the cross-layer optimization
problem is transformed into a long term weighted sum effective
capacity maximization problem. In addition, to further guarantee the
service fairness between different delay-tolerant users, the queue
backlogs maintained at the transmitters for each user have to be
considered in the design of radio resource optimization schemes. The
authors of\textcolor[rgb]{1.00,0.00,0.00}{\cite{I:A10}} have
investigated the delay-optimal policy in a two-user multiple access
channel, where the delay minimization problem is formulated as a
Markov decision process (MDP) and the optimal policy traded a
portion of the sum-rate for balancing the queue lengths to minimize
the average delay. The delay-optimal power and subcarrier allocation
problem for OFDMA systems was modeled as a $K$-dimensional infinite
horizon average reward MDP with the control actions based on channel
state information (CSI) and joint queue state information (QSI)
in\textcolor[rgb]{1.00,0.00,0.00}{\cite{V:cuiyinglau}}. Furthermore,
a cache-enabled cross-layer opportunistic cooperative MIMO framework
for wireless video streaming is proposed
in\textcolor[rgb]{1.00,0.00,0.00}{\cite{{V:anliu}}}. By equipping
the relay with a cache to buffer the video streams, the cache
control policy adaptive to the popularity of the video files could
provide more cooperative opportunities, while the power control
policy adaptive to QSI and CSI is determined by solving the
approximated MDP approach using the continuous time Bellman
equations to maintain the QoS metrics of playback interruption
probability and the buffer overflow probability.

However, the number of queues in realistic systems is not often
sufficiently large, which causes the issue of curse of
dimensionality with the MDP approach due to the exponential growth
of the cardinality of the system state space. In addition, it is
difficult to obtain a distributed resource allocation solution with
MDP since the potential function is not decomposable. To achieve a
desired tradeoff between network throughput and queuing
delay,\textcolor[rgb]{1.00,0.00,0.00}{\cite{I:A11}}
and\textcolor[rgb]{1.00,0.00,0.00}{\cite{I:A12}} proposed
distributed resource allocation and user scheduling solutions in
Long Term Evolution-Advanced (LTE-A) relay networks and wireless
multihop networks, respectively, both of which utilize Lyapunov
optimization to stabilize the queues of networks when
optimizing performance metrics. Lyapunov optimization is a useful tool for handling
queue-aware radio resource allocation problems with a
good balance between performance and implementation complexity. For the Lyapunov optimization theory,
the concepts of Lyapunov function and Lyapunov drift are introduced, and the performance metrics of the
network can be optimized while stabilizing queues of the network by
greedily minimizing the drift-plus-penalty. With
the Lyapunov optimization approach, the problem of opportunistic
cooperation in a cognitive two-tier underlay HetNet has been studied
in\textcolor[rgb]{1.00,0.00,0.00}{\cite{V:mikeneely}}, where the
cognitive LPNs handle intelligent access admission, cooperation
decision making, and power control to maximize their own throughputs subject
to average power constraints. The obtained online control algorithm
can stabilize the multimedia traffic queue without requiring any knowledge of
the multimedia traffic arrival rates. A two-stage queue-aware cross-layer radio
resource allocation algorithm has been proposed
in\textcolor[rgb]{1.00,0.00,0.00}{\cite{V:juntingchen}} based on
minimizing drift-plus-utility, which can be easily applied for
HetNets.

Inspired by{\color{red}\cite{I:A11}--\cite{V:juntingchen}}, the
well-developed stability theory of Lyapunov optimization
is considered in this paper. Since the optimal radio resource
allocation policy can be achieved by minimizing the
drift-plus-utility, in which the resulting queuing delay and utility
performance are bounded, we focus on maximizing EE under the stable
queue with the transmit power, individual fronthaul capacity, and
interference constraints in H-CRANs. There are three technical
challenges associated with this EE optimization problem in
queue-aware H-CRANs:

\begin{itemize}
 \item \textbf{Challenges due to Joint Considerations of EE and Queuing Delay:} Unlike other
 works focusing only on optimizing SE or EE, which only requires
 CSI in the PHY layer, the optimization involving EE
 and average queuing delay is fundamentally challenging since it
 introduces coupling between the PHY and MAC layers. To take the queuing delay into consideration, the resource
 allocation policy should be a function of both QSI and CSI, which is
 a nontrivial problem since the QSI and CSI vary randomly at each time slot and may not have a closed-form
 expression.
 \item \textbf{Challenges due to Inter-tier Interference and Individual Fronthaul Capacity Constraints:}
 Unlike traditional C-RANs, the inter-tier interference from
 MBSs in H-CRANs should be suppressed by advanced collaborative processing techniques, and the inter-tier interference to MUEs
 should be mitigated to a low level with advanced coordinated
 scheduling and power control techniques. Unlike the traditional HetNets, intra-tier interference
 among RRHs in H-CRANs can be suppressed through the centralized BBU pool but with individual
 fronthaul capacity constraints, and the inter-tier interference often exists between a single powerful MBS and a very large number of RRHs.
 \item \textbf{Challenges due to Minimization of Drift-plus-penalty:}
 The queues of data flows are coupled due to the mutual
 inter-tier interference in H-CRANs. Thus, the associated stochastic optimization problem formulated
 by minimizing the drift-plus-penalty under the Lyapunov optimization framework
 is complex because resource allocation decisions for different RRHs are affected by
 each other. With the time-varying nature of wireless environments,
 this problem is challenging to solve.
\end{itemize}

\vspace{-10pt}
\subsection{Contribution and Organization}

With the introduction of H-CRANs, the development of effective radio resource management techniques
to optimize EE for non-real time packet service is important. In addition, to satisfy
diverse QoS requirements, it is crucial to use cross-layer radio resource management
algorithms in H-CRANs, which has been seldom studied to date.
In this paper, a weighted EE
performance metric is presented, {\color{red}and the corresponding EE
optimization problem with inter-tier interference,
individual fronthaul capacity, and total transmission power constraints is
formulated, in which both cross-layer design and queue-aware congestion control are taken into account. Since this EE optimization problem is a combination of
time-averaged variables and instantaneous variables, an optimal network-wide cooperative beamformer design
algorithm is proposed based on minimizing the drift-plus-utility under the
Lyapunov optimization framework, in which a generalized weighted
minimum mean square error (WMMSE){\color{red}~\cite{WMMSE}} approach
is used to optimize the EE performance and guarantee the queue
stability.} Although the WMMSE approach has been applied
in{\color{red}~\cite{Yuwei}} to jointly optimize the user scheduling
and beamforming vectors under either dynamic or fixed BS clustering
for C-RANs, this previous work does not explicitly take the EE
optimization and queue stability into consideration for H-CRANs.

The major contributions of this paper can be summarized as follows.
\begin{itemize}
\item An average weighted EE utility function in terms of sum transmit rate and total energy consumption with different weight
factors is defined to capture the EE performance in H-CRANs. To
maximize this EE performance metric and keep the multimedia traffic
queue stability, an EE optimization problem with the instantaneous
and average power, individual fronthaul capacity and inter-tier
interference constraints is formulated for queue-aware H-CRANs. To
solve this non-convex optimization problem, the Lyapunov
optimization framework is utilized, under which the optimization
problem is transformed into the minimization of the
drift-plus-penalty function. Furthermore, this minimization of the
drift-plus-penalty function can be reformulated as the optimal
network-wide beamformer design problem under transmit power and
inter-tier interference constraints.

\item A generalized WMMSE approach is proposed to solve the optimal network-wide beamformer design
problem. Unlike previous work in{\color{red}~\cite{Yuwei}},
whose aim is to solve the weighted sum rate maximization problem
with backhaul constraints in C-RANs, this paper applies the
generalized WMMSE approach to solve the average weighted EE utility
objective function with each RRH's transmit power, individual
fronthaul capacity, and inter-tier interference constraints. To
quantitatively optimize the tradeoff between the average weighted EE
and the queuing delay on demand, a non-negative parameter $V$ is
defined, which in turn adaptively affects the solutions of
network-wide cooperative beamformer design and power allocation.

\item The proposed optimal network-wide beamformer design algorithm can approach an $[\mathcal{O}(1/V),\mathcal{O}(V)]$ tradeoff between the averaged weighted EE performance and queue backlog, which indicates that the
average weighted EE performance can be arbitrarily close to the
optimum with the gap of $\mathcal{O}(1/V)$ at the expense of
incurring an average queue backlog that is $\mathcal{O}(V)$.
Simulation results exhibit the EE performance under varied
$V$, and show the influence of the fronthaul capacity constraint on
the tradeoff between the average weighted EE performance and average
queue backlog.

\end{itemize}

The rest of this paper is organized as follows. Section II describes
the system model and formulates the optimization problem. In Section
III, based on the general Lyapunov optimization framework, the
formulated problem is transformed into a WMMSE problem which can be
solved by an iterative approach. In Section IV, the performance of
the proposed network-wide beamformer design algorithm is analyzed.
Section V presents the simulation results and Section VI summarizes
this paper.

Throughout this paper, the following notation is adopted. {\color{red}
Lower-case bold letters ${\bf{v}}$ denote column vectors, and
upper-case bold letters ${\bf{D}}$ denote matrices.} We use
$\mathbb{C}$ to denote the complex domain. The complex Gaussian
distribution is represented by $\mathcal{CN(\cdot,\cdot)}$, while
$\bf{Re\{\cdot\}}$ stands for the real part of a scalar. We use
$\{x\}^{+}$ to denote the larger of $x$ and $0$.
$\mathbb{E}\big[x\big]$ is the expectation of the
random variable $x$, and $(\cdot)^H$ denotes the matrix conjugate
transpose. $\mathbf{0}_{N}$ and $\mathbf{I}_N$ are $N \times N$
zero matrix and identity matrix, respectively.

\section{System Model and Problem Formulation}

In this section, the considered H-CRAN system scenario and
definition of network stability are introduced. Based on the
H-CRAN system model and defined queue stability, the EE
optimization problem is formulated.

\vspace{-10pt}
\subsection{System Model}

As illustrated in Fig. \ref{Sys}, a downlink H-CRAN system
consisting of one MBS and $N$ RRHs is considered. $N$ RRHs are
deployed within the same coverage of the single MBS in an underlay
manner. The RRHs and MBS are connected to a BBU pool with the
fronthaul and backhaul links, respectively. The MBS and each RRH are
equipped with ${L_M}$ and ${L_R}$ antennas, respectively. Define the
set of MBS and RRHs as $\left\{ {0,1,2,...,N} \right\}$, where the index
$0$ refers to the MBS, which serves $K_M$ single-antenna MBS user
equipments (MUEs), and ${\mathcal{N}}=\left\{ {1,2,...,N} \right\}$
denotes the set of RRHs, which cooperatively serve $K_R$
single-antenna RRH user equipments (RUEs) with user-centric
clustering. Define the set of RUEs as $\mathcal{K}_R =\left\{
{1,2,...,K_R} \right\}$, and the set of MUEs as
$\mathcal{K}_M=\left\{ {1,2,...,K_M} \right\}$. This H-CRAN system
is assumed to operate in the slotted time mode with the unit time
slot $t \in \left\{ {0, 1, 2, \cdots } \right\}$, where the time
slot $t$ refers to the interval $\left[ {t,t + 1} \right)$. Under
the assumption that the BBU pool centrally processes all RUEs'
signals and distributes each RUE's data to an individually selected
cluster of RRHs through fronthaul links, each RUE is cooperatively
served by its serving cluster through the joint beamforming
technique, and receives an independent data stream from the RRH at
the time slot $t$. It is assumed that the scalar-valued data stream
${s_k}(t)$ is temporally white with zero mean and unit variance.

\begin{figure}
\centering
\includegraphics[width=9cm]{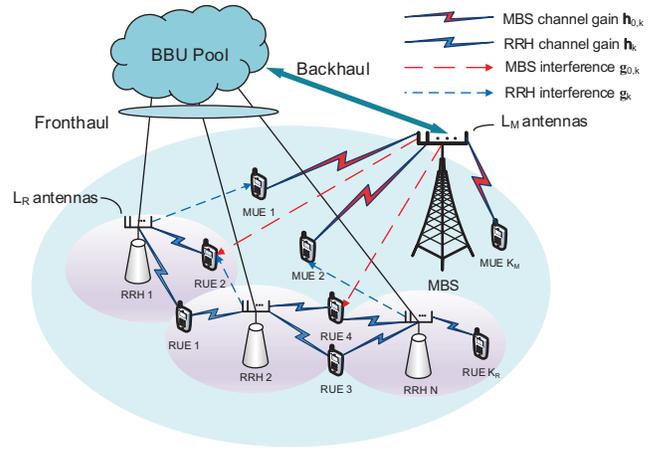}
\vspace*{-10pt} \caption{Downlink H-CRANs with one MBS, $N$ RRHs,
$K_R$ RUEs and $K_M$ MUEs.}\vspace*{-10pt} \label{Sys}
\end{figure}

Under centralized large-scale cooperative processing in the BBU
pool, the transmit beamformer from RRH $n$ to RUE $k$ in the time
slot $t$ is defined as ${{\bf{v}}_{n,k}(t)} \in \mathbb{C}^{{L_R}
\times 1}$, and the corresponding network-wide beamforming vector
for RUE $k$ can be expressed as ${{\bf{v}}_k}(t) =
[\mathbf{v}_{1,k}^H(t),\mathbf{v}_{2,k}^H(t),\ldots,\mathbf{v}_{N,k}^H(t)]^H\in
\mathbb{C}^{N{L_R} \times 1}$. Given that ${{\mathbf{D}}_n} =
\left\{
{\underbrace{\mathbf{0}_{L_R},\ldots,\mathbf{0}_{L_R},}_{n-1}{\mathbf{I}_{{L_R}}},\ldots,\mathbf{0}_{L_R}}
\right\} \in \mathbb{C}{^{{L_R} \times N{L_R}}} (n > 0)$,
${{\bf{v}}_{n,k}(t)}$ can be represented through ${{\bf{v}}_k(t)}$, i.e.,

\begin{equation}
\label{eq:I0} {{\bf{v}}_{n,k}(t)} = {{\mathbf{D}}_n}{{\bf{v}}_k(t)},
\end{equation}

Note that ${{\bf{v}}_k}(t)$ can be written as a combination of the transmit power ($\|{{\bf{v}}_k}(t)\|^2$)
and the unit beamformer (${{\bf{\bar{v}}}_k}(t)=\frac{{{\bf{v}}_k}(t)}{\|{{\bf{v}}_k}(t)\|}$) for simplicity.

In particular, the transmit beamformer from the MBS to MUE $k$ is
denoted by ${{\bf{v}}_{0,k}(t)} \in \mathbb{C}^{{L_M} \times 1}$.
Though all RRHs can potentially serve each scheduled RUE, in fact,
each RUE is mainly contributed to by only a small number of adjacent
RRHs and the network-wide beamforming vector is often group
sparse{\color{red}~\cite{Yuwei}}.

With the linear transmit beamforming scheme at the
RRHs{\color{red}~\cite{linearBF}}, {\color{red}the received signal at the RUE
$k$, denoted by $y_k(t) \in \mathbb{C}$, consists of the desired signal, the interference signal of other RUEs, and the interference signal of the total $K_M$ MUEs. As a result, $y_k(t) \in \mathbb{C}$ can be written as}

\begin{equation}
\label{eq:I1}
\begin{split}
{y_k}(t) = & {{\bf{h}}_k^H(t)}{{\bf{v}}_k(t)}{s_k(t)} + \sum\limits_{j
= 1,j \ne k}^{{K_R}} {{{\bf{h}}_k^H(t)}{{\bf{v}}_j(t)}{s_j(t)}} \\
&+\sum\limits_{i = 1}^{{K_M}}
{{{\bf{g}}_{0,k}^H(t)}{{\bf{v}}_{0,i}(t)}{s_i(t)}} + {n_k}(t),
\end{split}
\end{equation}
{\color{red}where ${{\bf{h}}_k(t)} \in \mathbb{C}^{N{L_R}\times 1}$ denotes the
CSI matrix from all RRHs' transmit antennas to the RUE $k$, and
${{\bf{g}}_{0,k}(t)} \in \mathbb{C}^{{L_M}\times 1}$ denotes the
CSI matrix from the MBS's transmit antennas to the RUE $k$.}
${n_k}(t)$ is the received noise at the RUE $k$ with the
distribution $\mathcal{CN}\left( {0,{\sigma ^2}} \right)$, where
$\sigma^2$ is the noise variance at RUEs. {\color{red}Eq. (\ref{eq:I1}) suggests
that both ${\bf{v}}_j(t)$ and ${\bf{v}}_{0,i}(t)$ should be carefully
designed to suppress the intra-tier and inter-tier interference,
respectively.}

\vspace{-10pt}
\subsection{Queueing Model}

Since the MBS in H-CRANs is mainly used to deliver the control signalling
and provide seamless coverage with a low bit rate, for the ease of implementation, the beamformers of the MBS can be assumed to be
fixed over a longer duration than the scheduling slot of RRHs, and
thus the performance of MUEs can be assumed to remain stable if the inter-tier interference from
RRHs is suppressed to a pre-defined threshold. Therefore, we can focus only on the performance optimization
with queue stability for overall RRHs in H-CRANs under the condition that the queue stability of the MBS is guaranteed.

Suppose there are queues maintained for RUEs in H-CRANs which are
represented by ${\bf{Q}}\left( t \right) = \left\{ \left. {Q_k (t)}
\right| k = 1,...,K_R \right\}$, where ${Q_k \left( t \right)}$
denotes the queue backlog for RUE $k$ at time slot $t$. The
random multimedia traffic arrival for RUE $k$ at the time slot $t$ is
denoted by $A_k(t)$, which is assumed to be independent and
identically distributed (i.i.d.) over time slots with the peak
arrival rate $A_{k}^{max}$. Define the set of $A_k(t)$ is as
${\bf{A}}\left( t \right) = \left\{ \left. {A_k (t)} \right| k =
1,...,K_R \right\}$, and the arrival rates of queues are $
\boldsymbol{\lambda}={\mathbb{E}}\left\{ {{\mathop{\rm
{\bf{A}}}\nolimits} \left( t \right)} \right\} $.

At each time slot, the arrival and departure rates of RUE $k$
are $ A_k \left( t \right)$ and $ R_k \left( t \right)$,
respectively. Therefore, $ Q_k \left( t \right)$ evolves according
to

\begin{equation}
\label{eq:I3}
{Q_k}(t + 1) = {\left\{ {{Q_k}(t) - {R_k}(t)}
\right\}^ + } + {A_k}(t).
\end{equation}

Considering the random and bursty characteristics of multimedia traffic
arrivals and the QoS requirement of RUEs in H-CRANs, it is
imperative to consider queue-aware resource allocation
techniques. Therefore, to achieve this objective, the queue
stability is defined as follows.

{\color{red}\begin{Definition}
\textit{A discrete time process $Q(t)$ is mean-rate stable{\color{red}~{\cite{II:R1}}} if}
\begin{equation}
\label{eq:stability}
\mathop {\lim}\limits_{t \to \infty}{\frac{\mathbb
{E}\{|Q(t)|\}}{t}}=0.
\end{equation}
\end{Definition}
Note that an absolute value of
$Q(t)$ is used in the mean rate stability definition, which is useful for virtual queues those can be possibly negative.
}
\vspace{-10pt}
\subsection{Problem Formulation}

To optimize the EE performance of H-CRANs, transmission rate and
power consumption performance metrics should be jointly considered.
According to (\ref{eq:I1}), these two performance metrics are
both presented as functions of the network-wide beamforming vector
${\bf{v}}_k(t)$ for RUE $k$:
\begin{itemize}
\item[\emph{$\bullet$}] \emph{Transmission Rate:} {\color{red}RUE $k$ is scheduled at time slot $t$, i.e., $R_k(t)$ is
nonzero if and only if its network-wide beamforming vector
${\bf{v}}_k(t)$ is nonzero.} {\color{red}Based on the network-wide beamforming vector ${\bf{v}}_k(t)$ for RUE $k$ and ${\bf{v}}_j(t)$ for RUE $j$, the signal-to-interference-plus-noise ratio (SINR) can be directly derived as $\mathbf{v}_{k}^{H}(t)
\mathbf{h}_{k}(t)\big(\!\sum_{j=1,j\neq
k}^{K_{R}}\!\mathbf{h}_{k}^{H}(t)\mathbf{v}_{j}(t)
\mathbf{v}_{j}^{H}(t)\mathbf{h}_{k}(t)+ \phi_k(t)
\big)^{-1}\mathbf{h}_{k}^H(t)\mathbf{v}_{k}(t)$. As a result, according to the Shannon capacity formula,} the achievable transmission rate for
RUE $k$ at time slot $t$ can be expressed as

\begin{equation}
\label{eq:I2}
\begin{split}
R_{k}(t)=& \log_{2}\bigg(1+\mathbf{v}_{k}^{H}(t)
\mathbf{h}_{k}(t)\big(\!\sum_{j=1,j\neq
k}^{K_{R}}\!\mathbf{h}_{k}^{H}(t)\mathbf{v}_{j}(t)\\
&\mathbf{v}_{j}^{H}(t)\mathbf{h}_{k}(t)+ \phi_k(t)
\big)^{-1}\mathbf{h}_{k}^{H}(t)\mathbf{v}_{k}(t)\bigg),
\end{split}
\end{equation}
where $\phi_k(t)=
\sum_{i=1}^{K_{M}}\mathbf{g}_{0,k}^{H}(t)\mathbf{v}_{0,i}(t)\mathbf{v}_{0,i}^H(t)\mathbf{g}_{0,k}(t)
+\sigma^{2}$ can be assumed to remain constant in time slot $t$
because the beamformers of the MBS are fixed.

\item[\emph{$\bullet$}] \emph{Power Consumption:} Since the transmitted signals from RRHs to RUEs have unit variance,
the radio frequency power consumption ${P_n} (t)$ of the $n$-th RRH
in the time slot $t$ depends only on the beamformer transmitting to the
RUEs. Therefore, the power consumption for RRH $n$ serving all
potential $K_R$ RUEs can be written as
\begin{equation}
\label{eq:I5}
{P_n} (t) = \sum\limits_{k = 1}^{{K_R}} {{\bf{v}}_k^H(t) {{{\mathbf{D}}_n^H}{{\mathbf{D}}_n} } {{\bf{v}}_k(t)}}+PC_n(t)+PF_n(t),
\end{equation}
where $P_n(t)$ denotes the total power consumption of the $n$-th RRH,
and $PC_n(t)$ and $PF_n(t)$ are the circuit power consumption and fronthaul power consumption
of RRH $n$, respectively. Note that the circuit
power of RRHs is negligible because the energy consumption of
air conditioning is avoided. Since RRHs are connected to the BBU pool via optical
fiber to alleviate fronthaul capacity constraints, the power consumption of the fronthaul is rather small compared with the transmit power of RRHs, and it can be neglected, too. Thus, the power consumption
model can be reformulated as
\begin{equation}
\label{eq:I5}
\begin{split}
{P_n} (t) & = \sum\limits_{k = 1}^{{K_R}} {{\bf{v}}_k^H(t) {{{\mathbf{D}}_n^H}{{\mathbf{D}}_n} } {{\bf{v}}_k(t)}}, \\
{\overline P _n} & = \mathop {\lim }\limits_{t \to \infty }
\frac{1}{t}\sum\limits_{\tau = 0}^{t - 1} \mathbb{E}\{{P_n}(\tau
)\},
\end{split}
\end{equation}
where ${\overline P _n}$ is the time average of $P_n(t)$.

\end{itemize}

{\color{red}The traditional EE metric is defined as the ratio of the weighted
sum transmit rate to the corresponding weighted total
energy consumption in units of bit/Hz/Joule, which is given by

\begin{equation}
\label{EE_T}
\begin{split}
\tilde{\eta}_{{EE}} (t)=\frac{\sum\limits_{k =
1}^{K_R}  {\omega'_kR_k}( t)}
{\sum\limits_{n = 1}^N \mu' _nP_n(t)},
\end{split}
\end{equation}
where ${\omega'_k} \ge 0 $ and ${\mu' _n} \ge 0 $ represent the
transmit weight of user $k$ and the power consumption weight of the
$n$-th RRH, respectively.
}

Following \cite{II:R2} and\cite{II:R3}, instead of directly maximizing $\tilde{\eta}_{{EE}} \left( t \right)$,
we define an alternative form of EE, ${\eta}_{EE}\left( t \right)$, and aim at maximizing it.

\begin{Definition}
\textit{To quantitatively capture the
relative importance of transmit rate and power consumption, the
weighted EE utility function $f\left(R_k(t),P_n(t)\right)$ in terms
of sum transmit rate and total energy consumption with varied weight
factors is used \cite{II:R2} in this paper to denote the equivalent EE metric of
overall RRHs in the time slot $t$ as follows:}
\begin{equation}
\label{eq:I6}
\begin{split}
{\eta}_{EE}\left( t \right) = & f\left(R_k(t),P_n(t)\right) \\
=&
\frac{\alpha }{K_R}\sum\limits_{k = 1}^{K_R} {{\omega_k}{{R_k}\left( t
\right)}} - \frac{{1 - \alpha }}{N}\sum\limits_{n = 1}^N {{\mu
_n}{{P_n}\left( t \right)}},
\end{split}
\end{equation}
\textit{where $\alpha \in \left[ {0,1} \right]$ is a weighting factor
representing the ratio of the transmit rate to the power
consumption. Here, ${\omega_k} \ge 0 (channels/bits)$ and ${\mu _n} \ge 0 (W^{-1})$ represent the
transmit weight of user $k$ and the power consumption weight of the
$n$-th RRH, respectively.}
\end{Definition}

\begin{remark}
(1) Note that ${\eta}_{EE}\left( t \right)$ can be used to characterize
$\tilde{\eta}_{{EE}} \left( t \right) $ [\cite{II:R2}, cf. from Eq. (9) to Eq. (11) on Page 3].
(2) Since both $R_k(t)$ and $P_n(t)$
depend on the network-wide beamforming vector ${\bf{v}}_k(t)$,
${\eta}_{EE}\left( t \right)$ is mainly determined by
${\bf{v}}_k(t)$, and the optimization of ${\eta}_{EE}\left( t
\right)$ is strictly related to ${\bf{v}}_k(t)$.
\end{remark}

Note that the beamformers from RRHs to RUEs cause severe
inter-tier interference to MUEs. Therefore, the inter-tier
interference from RRHs to the $k$-th MUE should be constrained,
which can be formulated as
\begin{align}
&\sum\limits_{j = 1}^{{K_R}}
{{\bf{v}}_j^H(t){\bf{g}}_k(t){{\bf{g}}_k^H(t)}{{\bf{v}}_j(t)}} \le
{\varphi _k},k \in {\mathcal{K}_M}, \forall t,
\label{eq:interference}
\end{align}
{\color{red}where ${\bf{g}}_k (t)\in \mathbb{C}^{N{L_R}\times 1}$ denotes the CSI
matrix from all RRHs' transmit antennas to MUE $k$.} {\color{red}Eq. (\ref{eq:interference}) suggests that the overall inter-tier
interference from adjacent RRHs should be suppressed to a
pre-defined threshold by appropriately designing the network-wide
beamforming vector ${\bf{v}}_j(t)$.}

Meanwhile, the overall radio-over-fiber in-phase/quadrature (I/Q) fronthaul capacity for the $n$-th RRH is constrained
by a capacity threshold $F_n$. Considering the
compression techniques over the fronthaul, the fronthaul capacity is not exactly equal to the accumulated data rates but rather is a utility function of it \cite{2}. A utility function should be used not only to present the linear relationship between the radio-over-fiber I/Q fronthaul capacity and the accumulated data rate, but also to incorporate the impact of the compression technique. Therefore, the fronthaul capacity constraint is expressed as
\begin{align}
&g\{\sum\limits_{k = 1}^{K_R} \mathds{{1}} \{
{\bf{v}}_{n,k}^H(t){\bf{v}}_{n,k}(t) \} R_k(t)\} \nonumber\\
=&g\{\sum\limits_{k =
1}^{K_R} \mathds{1} \{
{\bf{v}}_{k}^H(t){\mathbf{{D}}_n^H}{\mathbf{{D}}_n}{\bf{v}}_{n}(t)
\} R_k(t)\} \leq F_n, n \in \cal{N},\label{eq:fronthaul}
\end{align}
where $g(\cdot)$ reflects the relationship
between the accumulated data rate of radio access links and the radio-over-fiber I/Q fronthaul capacity under a given compression technique. {\color{red}Here, $\mathds{1} \{x\}$ denotes the indicator function of set $R/\{0\}$ for $x \geq 0$}:
\begin{equation}
\mathds{1}\{ x\} = \left\{ \begin{array}{l}
0, \quad if \quad x = 0\\
1, \quad else
\end{array} .\right.
\end{equation}

Let $C_n = g^{-1}(F_n)$; then the radio-over-fiber I/Q fronthaul capacity constraint can be expressed equivalently that the accumulated data rate is not beyond $C_n$, which can be written as
\begin{equation}
\begin{split}
\sum\limits_{k = 1}^{K_R}\! \mathds{{1}} \{\!
{\bf{v}}_{n,k}^H(t){\bf{v}}_{n,k}(t) \}\! R_k(t) &\!=\!\sum\limits_{k =
1}^{K_R}\! \mathds{1} \{
{\bf{v}}_{k}^H(t){\mathbf{{D}}_n^H}{\mathbf{{D}}_n}{\bf{v}}_{n}(t)
\} R_k(t) \\
&\leq C_n, n \in \cal{N}.\label{eq:fronthaul}
\end{split}
\end{equation}

When considering all constraints, including the average power
consumption of each RRH expressed in (\ref{eq:I5}), the queue
stability expressed in (\ref{eq:stability}), the inter-tier
interference to MUEs expressed in (\ref{eq:interference}), and the
individual fronthaul capacity of each RRH expressed in
(\ref{eq:fronthaul}), the maximization of the averaged weighted EE
utility objective function for RRHs in H-CRANs can be formulated as
the following stochastic optimization problem:

\begin{equation}
\label{eq:I7}
\begin{split}
\mathop {\max }\limits_{\{ {{\bf{v}}_k(t)}\} } \quad & \overline {{\eta}_{EE}}= \mathop {\lim }\limits_{t \to \infty } \frac{1}{t}\sum\limits_{\tau = 0}^{t - 1} {\mathbb{E}\{{\eta_{EE}}(\tau )\}}\\
s.t. \quad & \\
C1: \quad & {\overline P _n} \le P_n^{avg},n \in \cal{N},\\
C2: \quad & \mathop {\lim}\limits_{t \to \infty}{\frac{\mathbb
{E}\{|Q_k(t)|\}}{t}}=0, k \in {\mathcal{K}_R},\\
C3: \quad & {P_n}(t) \le P_n^{\max },n \in \cal{N},\\
C4: \quad & \sum\limits_{j = 1}^{{K_R}} {{\bf{v}}_j^H(t){\mathbf{{g}}}_k(t){{\bf{g}}_k^H(t)}{{\bf{v}}_j(t)}} \le {\varphi _k},k \in {\mathcal{K}_M}, \forall t, \\
C5: \quad & \sum\limits_{k = 1}^{K_R} \mathds{{1}} \{
{\bf{v}}_{k}^H(t){\mathbf{{D}}_n^H}{{\mathbf{D}}_n}{\bf{v}}_{n}(t)
\} R_k(t) \leq C_n, n \in \cal{N}.
\end{split}
\end{equation}

In (\ref{eq:I7}), the constraint $C1$ ensures the long-term energy
consumption of the $n$-th RRH under the predefined level where
$P_n^{avg}$ denotes the average power consumption threshold. $C2$ is
the network stability constraint to guarantee a finite queue length
for each queue. $C3$ is the energy-saving constraint for the $n$-th
RRH where $P_n^{\max }$ denotes the maximum transmit power of the
$n$-th RRH. $C4$ is the constraint on interference from RRHs to
MUEs, and $C5$ is the constraint on the fronthaul consumption for
the $n$-th RRH.

Intuitively, the optimization objective function expressed in
(\ref{eq:I7}) with so many constraints is complex and cannot be directly
solved. We also note that $C1$ and $C2$ in (\ref{eq:I7})
are constraints on time averaged variables. Hence, they can be
satisfied only if the BBU pool has the CSI and knowledge of queue
backlogs at all time slots instantaneously, which is infeasible and
unpractical.
{\color{red}Fortunately, with Lyapunov optimization tool\textcolor[rgb]{1.00,0.00,0.00}{\cite{II:R1}},
the time-averaged constraints $C1$ and $C2$ can be transferred into instantaneous constraints,
and the optimization function with the $C1$ and $C2$ constraints can be
transformed into a queue mean-rate stable problem, which can be
solved only based on the observed CSI and queue backlogs at each
time slot.}

\section{Delay-aware EE Maximization}

In this section, the optimization problem in (\ref{eq:I7}) is
reformulated as an equivalent sum-MSE minimization problem. Then a
corresponding dynamic network-wide beamforming algorithm is proposed.

\vspace{-10pt}
\subsection{General Lyapunov Optimization}

Before presenting the solution of the problem, we first give the following lemma to show how
the average constraint can be transformed into a queue stability problem.

\begin{lemma}
Construct a virtual queue $H_n(t)$, the queue dynamics of which are
\begin{equation}
\label{eq:II8}
{H_n}(t + 1) = {\left\{ {{H_n}(t) - P_n^{avg} +
{P_n}(t)} \right\}^ + }.
\end{equation}

Suppose $\mathbb{E}\{H_n(0)\}<\infty$, if the virtual queue $H_n(t)$ is
mean-rate stable, the inequality $\overline {P}_n \leq P_n^{avg}$ can be satisfied.
\end{lemma}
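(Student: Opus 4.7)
The plan is to exploit the fact that $H_n(t)$ is constructed so that its increments, once the projection onto the nonnegative reals is removed, dominate the per-slot excess power $P_n(t)-P_n^{avg}$. Mean-rate stability of $H_n(t)$ will then force this excess to be nonpositive on average.

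First I would observe that the update rule $H_n(t+1)=\{H_n(t)-P_n^{avg}+P_n(t)\}^+$ implies $H_n(t+1)\ge H_n(t)-P_n^{avg}+P_n(t)$ for every sample path and every $t$, since $\{x\}^+\ge x$. Rearranging gives $P_n(t)-P_n^{avg}\le H_n(t+1)-H_n(t)$. Second, I would telescope this inequality over $\tau=0,1,\ldots,t-1$ to obtain $\sum_{\tau=0}^{t-1}(P_n(\tau)-P_n^{avg})\le H_n(t)-H_n(0)$. Dividing by $t$ and taking expectations, the left-hand side becomes $\frac{1}{t}\sum_{\tau=0}^{t-1}\mathbb{E}\{P_n(\tau)\}-P_n^{avg}$, and the right-hand side becomes $\mathbb{E}\{H_n(t)\}/t-\mathbb{E}\{H_n(0)\}/t$. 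Third, I would let $t\to\infty$: the second term on the right vanishes because $\mathbb{E}\{H_n(0)\}<\infty$ is finite by assumption, and the first term vanishes by the mean-rate stability hypothesis, noting that $H_n(t)\ge 0$ by construction so $|H_n(t)|=H_n(t)$ and $\lim_{t\to\infty}\mathbb{E}\{H_n(t)\}/t=0$ follows directly from Definition 1. The left-hand side converges to $\overline{P}_n-P_n^{avg}$ by the definition of $\overline{P}_n$ in (7), yielding $\overline{P}_n\le P_n^{avg}$ as required.

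There is no real obstacle beyond verifying that each step is valid; the proof is essentially a one-line telescoping argument once the inequality coming from dropping the projection onto the nonnegatives is observed. The only subtlety worth flagging is that the inequality $\{x\}^+\ge x$ (rather than equality) is precisely what allows the bookkeeping to work even when the unprojected virtual queue would dip below zero: any surplus absorbed by the projection can only tighten, never loosen, the resulting time-average bound. The finiteness hypothesis $\mathbb{E}\{H_n(0)\}<\infty$ is needed solely to rule out pathological initialization and disappears in the limit.
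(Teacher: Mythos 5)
Your proof is correct and follows essentially the same route as the paper's: both drop the projection via $\{x\}^{+}\ge x$ (the paper writes this as $\max\{P_n(t),P_n^{avg}-H_n(t)\}\ge P_n(t)$), telescope the resulting per-slot inequality, take expectations, divide by $t$, and invoke mean-rate stability together with $\mathbb{E}\{H_n(0)\}<\infty$ to kill the right-hand side. No substantive difference.
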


\begin{proof}
Suppose that $H_n(t)$ is mean-rate stable; then we have
$\mathop {\lim}\limits_{t \to \infty}{\frac{\mathbb
{E}\{|H_n(t)|\}}{t}}=0$ with the probability 1 based on Definition 1.
Summing (\ref{eq:II8}) from $t=0$ to $T-1$, we have that
\begin{equation}
\begin{split}
\sum\limits_{t = 0}^{T - 1}{H_n}(t + 1) =& \sum\limits_{t = 0}^{T - 1}{\left\{ {{H_n}(t) - P_n^{avg} +
{P_n}(t)} \right\}^ + } \\
= & \sum\limits_{t = 0}^{T - 1}\{ {H_n}(t) - P_n^{avg}  \\
& +\max \{{P_n}(t),P_n^{avg}-H_n(t)\} \}  \\
H_n(T) - H_n(0) = & \sum\limits_{t = 0}^{T - 1} {\max \{{P_n}(t),P_n^{avg}-H_n(t)\}}
-TP_n^{avg} \\
 \geq & \sum\limits_{t = 0}^{T - 1} {P_n(t)}
-TP_n^{avg},
\end{split}
\end{equation}
holds for all time slots $t > 0$. Taking the sum-expectation operation and the limit as $T \to \infty$,
we can conclude that
\begin{equation}
\mathop {\lim}\limits_{T \to \infty}{\frac{\mathbb
{E}\{|H_n(T)|\}}{T}} - \mathop {\lim}\limits_{T \to \infty}{\frac{\mathbb
{E}\{|H_n(0)|\}}{T}} =0 \geq \overline {P}_n - P_n^{avg}.
\end{equation}

Therefore, $\overline {P}_n \leq P_n^{avg} $ holds.
\end{proof}

With \emph{Lemma 1}, the constraint $C1$ in (11) is
transformed into a queue stability problem by constructing a virtual
queue $H_n(t)$ for each RRH $n$.

With the actual queues (\ref{eq:I3}) and virtual queues (\ref{eq:II8}),
denote ${\bf{\Theta }}(t) = \left[ {{\bf{Q}}(t),{\bf{H}}(t)}
\right]$ as the combined matrix of all the actual and virtual
queues, where ${\bf{H}} (t)=\left\{ \left. {H_n (t)} \right| n =
1,...,N \right\}$, the Lyapunov function is defined as a scalar
metric of queue congestion:
\begin{equation}
\label{eq:II9} L\left( {{\bf{\Theta }}(t)} \right) \buildrel \Delta
\over =\frac{1}{2}\left\{ {\sum\limits_{k = 1}^{{K_R}}
{{Q_k}^2{(t)}}
 + \sum\limits_{n = 1}^N {{H_n}^2{(t)}} } \right\}.
\end{equation}

The Lyapunov drift is introduced to push the Lyapunov function to a
lower congestion state and keep queues stable, which is defined as
\begin{equation}
\label{eq:II10} \Delta \left( {{\bf{\Theta }}(t)} \right) \buildrel
\Delta \over = \mathbb{E} \left\{ {L\left( {{\bf{\Theta }}(t + 1)}
\right)
 - L\left( {{\bf{\Theta }}(t)} \right)} \right\}.
\end{equation}

In terms of Lyapunov optimization, the underlying objective of
optimal network-wide beamformer design is to minimize an infimum bound on the
drift-plus-penalty expression in each time slot:
\begin{equation}
\label{eq:II11} \Delta\! \left( {{\bf{\Theta }}(t)} \! \right)\! -\!
V\!\mathbb{E} \left\{ {{\eta}_{EE}(t)|{\bf{\Theta }}(t)} \right\},
\end{equation}
where $V$ is a non-negative parameter controlling the tradeoff
between the average weighted EE performance and average queue
backlog. For simplicity, this parameter is termed as EE-delay tradeoff
in the following discussions.

With the queue dynamics of $Q_k(t)$ and $H_n(t)$ presented in
(\ref{eq:I3}) and (\ref{eq:II8}), respectively, and the definition
of Lyapunov drift in (\ref{eq:II10}), the following lemma holds.

\begin{lemma} At any time slot $t$, with the observed queue state
and CSI, and under any network-wide beamformer control decision, the
drift-plus-penalty satisfies the following inequality:
\begin{equation}
\label{eq:II12}
\begin{split}
& \Delta \left( {{\bf{\Theta }}(t)} \right) - V \mathbb{E}\left\{ {{\eta}_{EE}(t)|{\bf{\Theta }}(t)} \right\} \\
\le& B - V \mathbb{E} \left\{ {\eta_{EE}(t)|{\bf{\Theta }}(t)} \right\} \\
& + \sum\limits_{n = 1}^N {{H_n}(t) \mathbb{E} \left\{ {{P_n}(t) -
P_n^{avg}|{\bf{\Theta }}(t)} \right\}} \\
&+ \sum\limits_{k = 1}^{{K_R}}
{{Q_k}(t) \mathbb{E} \left\{ {{A_k}(t) - {R_k}(t)|{\bf{\Theta }}(t)}
\right\}},
\end{split}
\end{equation}
where $B>0$ is a finite constant which satisfies (\ref{eq:II13}) for $\forall t$ :
\begin{equation}
\label{eq:II13}
\begin{split}
B \ge & \ \frac{1}{2}\mathbb{E}\left\{ {\sum\limits_{k =
1}^{{K_{\rm{R}}}} {\left( {{A_k^2}{{(t)}} + {R_k^2}{{(t)}}} \right)}
} {|{\bf{\Theta }}\left( t \right)}\right\} \\
&+\frac{1}{2}\mathbb{E}\left\{ {\sum\limits_{n = 1}^N {{{\left(
{{P_n}(t) - P_n^{avg}(t)} \right)}^2}} } {|{\bf{\Theta }}\left( t
\right)}\right\}.
\end{split}
\end{equation}
\end{lemma}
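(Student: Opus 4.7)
The plan is to upper-bound the Lyapunov drift $\Delta(\Theta(t))$ using the standard second-moment ``squaring trick'' for queues with a $\{\cdot\}^+$ update, then recognize each piece of the resulting bound in the right-hand side of (\ref{eq:II12}). The key fact I will use repeatedly is that for any $x\in\mathbb{R}$, $(\{x\}^+)^2 \le x^2$, together with $\{Q_k(t)-R_k(t)\}^+ \le Q_k(t)$ (valid since $Q_k(t),R_k(t)\ge 0$).

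First I would square the actual-queue recursion (\ref{eq:I3}). Writing $Q_k(t+1)=\{Q_k(t)-R_k(t)\}^+ + A_k(t)$, expanding the square, and applying the two inequalities above gives the familiar per-slot bound
\begin{equation*}
Q_k^2(t+1) - Q_k^2(t) \;\le\; A_k^2(t)+R_k^2(t) + 2Q_k(t)\bigl(A_k(t)-R_k(t)\bigr).
\end{equation*}
A completely analogous argument applied to the virtual-queue recursion (\ref{eq:II8}), for which there is no separate ``arrival'' term, yields
\begin{equation*}
H_n^2(t+1) - H_n^2(t) \;\le\; \bigl(P_n(t)-P_n^{avg}\bigr)^2 + 2H_n(t)\bigl(P_n(t)-P_n^{avg}\bigr).
\end{equation*}

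Next I would sum the first inequality over $k\in\mathcal{K}_R$ and the second over $n\in\mathcal{N}$, multiply by $1/2$, and take conditional expectation given $\Theta(t)$. By the definition (\ref{eq:II9}) of $L$ and (\ref{eq:II10}) of the drift, the left-hand side collapses to $\Delta(\Theta(t))$. On the right-hand side, the purely quadratic terms $\tfrac12\mathbb{E}\{\sum_k(A_k^2(t)+R_k^2(t))\mid\Theta(t)\}+\tfrac12\mathbb{E}\{\sum_n(P_n(t)-P_n^{avg})^2\mid\Theta(t)\}$ are precisely the quantity upper-bounded by $B$ via hypothesis (\ref{eq:II13}); the remaining cross terms are exactly the $\sum_n H_n(t)\mathbb{E}\{P_n(t)-P_n^{avg}\mid\Theta(t)\}$ and $\sum_k Q_k(t)\mathbb{E}\{A_k(t)-R_k(t)\mid\Theta(t)\}$ appearing in (\ref{eq:II12}). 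Subtracting $V\,\mathbb{E}\{\eta_{EE}(t)\mid\Theta(t)\}$ from both sides then produces the claimed drift-plus-penalty inequality verbatim.

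There is no deep obstacle; the argument is essentially bookkeeping on top of the telescoping $\{\cdot\}^+$ bound. The one place that requires a little care is justifying that the constant $B$ can indeed be taken finite and independent of $t$: this relies on the peak arrival rate $A_k^{\max}$ bounding $A_k(t)$, the per-slot power cap $P_n^{\max}$ (constraint $C3$) bounding $|P_n(t)-P_n^{avg}|$, and the resulting Shannon-type bound on $R_k(t)$ via the finite transmit power. Provided these boundedness facts are invoked, (\ref{eq:II13}) is satisfied by a deterministic constant, and the lemma follows directly from the two squaring inequalities above.
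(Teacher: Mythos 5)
Your proof is correct and follows essentially the same route as the paper's Appendix A: square the actual and virtual queue recursions using $(\{x\}^+)^2\le x^2$ and $\{Q_k(t)-R_k(t)\}^+\le Q_k(t)$, sum, take conditional expectations to recover $\Delta(\boldsymbol{\Theta}(t))$, absorb the quadratic terms into $B$, and subtract the penalty. Your closing remark on why $B$ is finite (peak arrivals, power caps, bounded rates) is a sensible addition that the paper leaves implicit.
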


\begin{IEEEproof}
See Appendix A.
\end{IEEEproof}

To push the underlying objective (\ref{eq:II11}) to its minimum, a
proper network-wide beamformer ${\bf{v}}_k(t)$ is chosen to
greedily minimize the drift-plus-penalty in (\ref{eq:II11}). As a result, a strategy is proposed herein
to minimize the right-hand-side (R.H.S) of the inequality of drift-plus-penalty in (\ref{eq:II12}) based on the observed QSI and
CSI at each time slot $t$ instead of minimizing (\ref{eq:II11}) directly. Based on the concept of opportunistically
minimizing an expectation, this is accomplished by greedily
minimizing as follows:
\begin{equation}
\label{eq:II14} \mathop {\min }\limits_{\{ {{\bf{v}}_k(t)}\} }
\left[ {\sum\limits_{n = 1}^N {{H_n(t)}{P_n(t)}} - \sum\limits_{k =
1}^{{K_R}} {{Q_k(t)}{R_k(t)}} - V \eta_{EE}(t)} \right].
\end{equation}

For notational simplicity, $X_n(t)$ and $Y_k(t)$ are denoted by
\begin{equation}
\label{eq:II15}
\begin{split}
{X_n}(t) = & \ {H_n(t)} + \frac{V(1 - \alpha ){\mu_n}}{N},\\
{Y_k}(t) = & \ {Q_k(t)} + \frac{V\alpha {\omega _k}}{K_R},
\end{split}
\end{equation}
respectively. With the constraints $C1$ and $C2$ incorporated into
the objective function (\ref{eq:II14}), the optimization problem
(\ref{eq:I7}) can be rewritten as
\begin{equation}
\label{eq:II16}
\begin{split}
\mathop {\min }\limits_{\{ {{\bf{v}}_k(t)}\} } \quad & \sum\limits_{n = 1}^N {{X_n(t)}{P_n(t)}} - \sum\limits_{k = 1}^{{K_R}} {{Y_k(t)}{R_k(t)}} \\
s.t. \quad & C3, C4, C5,
\end{split}
\end{equation}
where $X_n(t)$ and $Y_k(t)$ can be calculated by the observed QSI at
time slot $t$. $P_n(t)$ and $R_k(t)$ are based on the CSI at
time slot $t$ and the beamforming vector ${\bf{v}}_k(t)$. Thus, a
network-wide cooperative beamformer design algorithm is proposed in
the next subsection to solve problem (\ref{eq:II16}).

\vspace{-10pt}
\subsection{Beamformer Design Algorithm}

The optimization problem (\ref{eq:II16}) is non-convex,
which is difficult to solve directly.
To present a solution of
(\ref{eq:II16}), the performance of control actions to obtain a
local minimum which is within an additive constant of the infimum is analyzed.
Therefore, in the following content, the definition of
\emph{C-additive approximation}\textcolor[rgb]{1.00,0.00,0.00}{\cite{II:R1}} is first introduced, based on
which the locally optimal solution of problem (\ref{eq:II16}) is
analyzed. Finally, a network-wide beamformer design algorithm is
proposed.

\begin{Definition}

For a given constant $C \geq 0$, a \emph{C-additive approximation}
of the drift-plus-penalty algorithm is to choose an action that
yields a conditional expected value on the right-hand-side of the
drift-plus-penalty (given $\boldsymbol{\Theta}(t)$) at time slot
$t$, that is within a constant $C$ from the infimum over all
possible control actions.
\end{Definition}

Based on \emph{Definition 3}, a local optimal beamformer algorithm
can be designed. Note that in the constraint $C5$, the indicator
function can be equivalently expressed as a scalar $\ell0$-norm,
which is the number of nonzero entries in a vector. The indicator
function can be approximated by a convex re-weighted
$\ell1$-norm\textcolor[rgb]{1.00,0.00,0.00}{\cite{II:ECand}\cite{Yuwei1}}, i.e.,

\begin{equation}
\mathds{{1}}
\{{{\bf{v}}_{k}^H(t)}{\mathbf{{D}}_n^H}{\mathbf{{D}}_n}{\bf{v}}_{k}(t)\}
={\beta_n^k(t)} {{\bf{v}}_{k}^H(t)}{\mathbf{{D}}_n^H}{\mathbf{{D}}_n}{{\bf{v}}_{k}(t)},
\end{equation}
where $\beta_n^k(t)$ is updated iteratively according to $\beta_n^k(t) =
\frac{1}{{{\bf{v}}_{k}^H(t)}{\mathbf{{D}}_n^H}{\mathbf{{D}}_n}{{\bf{v}}_{k}(t)}+\kappa}$
with a small constant regularization factor $ \kappa> 0$ and
${\bf{v}}_{k}$ from the previous iteration. Since it is still difficult to solve a problem that
involves $R_k(t)$ in both the objective function and the constraints, an iterative scheme is used
in which the fixed value of $R_k(t)$ from the previous iteration is adopted here. Thus, the optimization
problem can be written as
\begin{equation}
\label{eq:II17}
\begin{split}
\mathop {\min }\limits_{\{ {{\bf{v}}_k(t)}\} } \quad & \sum\limits_{n = 1}^N {{X_n(t)}{P_n(t)}} - \sum\limits_{k = 1}^{{K_R}} {{Y_k(t)}{R_k(t)}} \\
s.t. \quad & C3, C4,\\
\quad & C6: \sum\limits_{k=1}^{K_R}{\beta_n^k}
{{\bf{v}}_{k}^H(t)}{\mathbf{{D}}_n^H}{\mathbf{{D}}_n}{{\bf{v}}_{k}(t)}
\tilde{R}_k(t)\le C_n,
\end{split}
\end{equation}
where $\tilde{R}_k(t)$ in $C6$ is the rate of the previous
iteration. {\color{red}Obviously, the approximated problem (\ref{eq:II17}) is
still non-convex, while it can be reformulated as an equivalent WMMSE
problem to achieve a local optimum via the \emph{C-additive approximation} of the
drift-plus-penalty algorithm.} Inspired by the equivalence between
the weighted sum rate (WSR) maximization and WMMSE\textcolor[rgb]{1.00,0.00,0.00}{\cite{III:R11,III:R12,III:R13}} for the MIMO channel,
the generalized WMMSE equivalence established in\textcolor[rgb]{1.00,0.00,0.00}{\cite{III:R12}} is
extended to solve the problem (\ref{eq:II17}) in the H-CRAN
scenario. We state this equivalence as follows.

\begin{proposition}
The problem (\ref{eq:II17}) has the same optimal solution as the following WMMSE problem:
\begin{equation}
\label{eq:II18}
\begin{split}
\mathop {\min } \limits_{\{w_k(t),u_k(t),{\bf{v}}_{k}(t)\}} \quad & \sum\limits_{k = 1}^{{K_R}} {{Y_k(t)}\left\{ {{w_k(t)}{e_k(t)} - \log {w_k(t)}} \right\}} \\
&+ \sum\limits_{n = 1}^N {{X_n(t)}\sum\limits_{k = 1}^{K_R} {{\bf{v}}_k^H(t){\mathbf{{D}}_n^H}{\mathbf{{D}}_n}{{\bf{v}}_k(t)} }} , \\
 s.t. \quad & C3, C4, C6.
\end{split}
\end{equation}
where $w_k(t)$ denotes the mean-square error (MSE) weight for user $k$ at time
slot $t$, and $e_k(t)$ is the corresponding MSE defined as
\begin{equation}
\label{eq:II19}
\begin{split}
{e_k(t)} \buildrel \Delta \over = & \ {\mathbb{E}}\left\{ {{{\left( {{u_k(t)}{y_k(t)} - {s_k(t)}} \right)}^2}} \right\} \\
= &u_k^H(t) \Big( \sum\limits_{j = 1}^{{K_R}} {{\bf{v}}_j^H(t){\bf{h}}_k(t){{\bf{h}}_k^H(t)}{{\bf{v}}_j(t)}} \Big) u_k(t) \\
 & + u_k^H(t) \Big( \sum\limits_{i = 1}^{{K_M}} {{\bf{v}}_{0,i}^H(t){\bf{g}}_{0,k}(t){{\bf{g}}_{0,k}^H(t)}{{\bf{v}}_{0,i}(t)}} \Big) u_k(t) \\
 &- 2{\bf{Re}}\{{u_k(t)}{{\bf{h}}_k^H(t)}{{\bf{v}}_k(t)}\} + {\sigma ^2}{\bf{Re}}\{u_k^H(t)u_k(t)\} + 1,
\end{split}
\end{equation}
under the receiver $u_k(t) \in \mathbb{C}$.
\end{proposition}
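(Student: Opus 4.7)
The plan is to establish the equivalence via the classical WMMSE--WSR argument, adapted to this H-CRAN setting. First, I would observe that the only terms in the WMMSE objective of (\ref{eq:II18}) that differ from the original objective in (\ref{eq:II17}) are the per-user summands $Y_k(t)\{w_k(t)e_k(t) - \log w_k(t)\}$, which replace the terms $-Y_k(t) R_k(t)$; the RRH power term $\sum_n X_n(t)\sum_k \mathbf{v}_k^H \mathbf{D}_n^H \mathbf{D}_n \mathbf{v}_k$ is identical, and the constraints $C3$, $C4$, $C6$ do not involve $R_k(t)$, $u_k(t)$, or $w_k(t)$. So the equivalence reduces to showing that, after optimizing $u_k(t)$ and $w_k(t)$ for fixed $\{\mathbf{v}_k(t)\}$, the per-user WMMSE summand collapses to $-Y_k(t)R_k(t)$ up to an additive constant independent of $\mathbf{v}_k(t)$.

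The key steps, carried out for each fixed $t$, are: (i) Holding $\{\mathbf{v}_k\}$ and $\{w_k\}$ fixed, minimize $e_k$ over $u_k$ by setting $\partial e_k/\partial u_k = 0$ using the quadratic form in (\ref{eq:II19}); the resulting $u_k^\star$ is the scalar MMSE receiver, namely $u_k^\star(t) = \mathbf{h}_k^H(t)\mathbf{v}_k(t)\big/\bigl(\sum_{j} \mathbf{h}_k^H\mathbf{v}_j\mathbf{v}_j^H\mathbf{h}_k + \sum_i \mathbf{g}_{0,k}^H\mathbf{v}_{0,i}\mathbf{v}_{0,i}^H\mathbf{g}_{0,k} + \sigma^2\bigr)$. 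Substituting $u_k^\star$ into (\ref{eq:II19}) and using the matrix-inversion / Woodbury-type simplification gives $e_k^{\min}(t) = 1 - \mathbf{v}_k^H\mathbf{h}_k(\sum_j \mathbf{h}_k^H\mathbf{v}_j\mathbf{v}_j^H\mathbf{h}_k + \phi_k)^{-1}\mathbf{h}_k^H\mathbf{v}_k$, which by direct manipulation equals $\bigl(1 + \mathrm{SINR}_k(t)\bigr)^{-1}$ with the SINR in (\ref{eq:I2}). (ii) Holding $\{\mathbf{v}_k, u_k\}$ fixed at $(u_k^\star,\mathbf{v}_k)$, minimize over $w_k > 0$ by setting $\partial/\partial w_k\{w_k e_k^{\min} - \log w_k\} = 0$, which yields $w_k^\star(t) = 1/e_k^{\min}(t)$ and optimal value $1 + \log e_k^{\min}(t) = 1 - \log_2(1+\mathrm{SINR}_k(t)) = 1 - R_k(t)$ (interpreting $\log$ as $\log_2$, consistent with the rate definition).

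Plugging these back, the WMMSE objective evaluated at $(u_k^\star, w_k^\star)$ becomes
\begin{equation*}
\sum_{k=1}^{K_R} Y_k(t)\bigl(1 - R_k(t)\bigr) + \sum_{n=1}^N X_n(t)\sum_{k=1}^{K_R} \mathbf{v}_k^H(t)\mathbf{D}_n^H\mathbf{D}_n\mathbf{v}_k(t),
\end{equation*}
which differs from the objective of (\ref{eq:II17}) only by the additive constant $\sum_k Y_k(t)$ that is independent of $\{\mathbf{v}_k(t)\}$. Since $(u_k, w_k)$ can always be jointly optimized before the outer minimization over $\{\mathbf{v}_k\}$, and the feasible set for $\{\mathbf{v}_k\}$ in the two problems is identical, any joint minimizer $(\mathbf{v}_k^\star, u_k^\star, w_k^\star)$ of (\ref{eq:II18}) yields a $\mathbf{v}_k^\star$ that minimizes (\ref{eq:II17}), and conversely the optimal $\mathbf{v}_k^\star$ of (\ref{eq:II17}) extended by $(u_k^\star, w_k^\star)$ minimizes (\ref{eq:II18}).

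The main obstacle I anticipate is the algebraic identity $-\log e_k^{\min}(t) = R_k(t)$, because the expression for $e_k^{\min}$ in (\ref{eq:II19}) contains both the multi-RUE intra-tier interference term and the fixed inter-tier interference contribution from the MBS beamformers $\mathbf{v}_{0,i}(t)$ bundled into $\phi_k(t)$. Care must be taken to verify that, after substituting the scalar MMSE receiver, the MBS interference is absorbed correctly into the noise-plus-interference denominator of the SINR in (\ref{eq:I2}), and that the $\log$ base is consistent. Beyond this identification, the remainder is a routine two-step block-coordinate reduction that also motivates the iterative WMMSE algorithm alternating over $u_k$, $w_k$, and $\mathbf{v}_k$.
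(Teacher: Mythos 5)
Your proof is correct and is essentially the argument the paper itself relies on: the paper does not actually prove Proposition~1 but defers to the generalized WMMSE equivalence of \cite{III:R12}, and your two-step partial minimization reproduces exactly the optimal receiver (\ref{eq:II22}) and the optimal weight $w_k=e_k^{-1}$ of (\ref{eq:II20}), after which the per-user summand collapses to $Y_k(t)\bigl(1-R_k(t)\bigr)$, an additive constant away from $-Y_k(t)R_k(t)$. Your explicit checks on the two points where the reduction could fail --- the $\log$ base matching the rate definition, and the independence of the constraints $C3$, $C4$, $C6$ from $(u_k,w_k)$ because $C6$ uses the frozen rate $\tilde R_k(t)$ --- are exactly the right ones.
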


Based on the equivalent WMMSE problem (\ref{eq:II18})
which is convex with respect to each of the individual optimization
variables, the averaged weighted EE utility objective maximization
problem (\ref{eq:I7}) can be solved. This crucial observation allows
the problem (\ref{eq:I7}) to be solved efficiently through the block
coordinate descent method by iterating among
${{\bf{v}}_k(t)},{u_k}(t)$, and ${w_k(t)}$:
\begin{itemize}
\item[\emph{$\bullet$}] The optimal MSE weight ${w_k(t)}$ under the fixed ${{\bf{v}}_k(t)}$ and ${u_k(t)}$ is
given by
\begin{equation}
\label{eq:II20} w_k^{opt}(t) = \ e_k^{ - 1}(t).
\end{equation}

\item[\emph{$\bullet$}] The optimal receiver ${u_k(t)}$ under the fixed ${{\bf{v}}_k(t)}$ and ${w_k(t)}$ is
given by
\begin{equation}
\label{eq:II22}
\begin{split}
u_k^{opt}(t)\! =&\! {{\bf{h}}_k^H(t)}{{\bf{v}}_k(t)} \Big\{
\sum\limits_{j = 1}^{{K_R}}\!
{\bf{v}}_j^H(t){\bf{h}}_k(t){{\bf{h}}_k^H(t)}{{\bf{v}}_j(t)}\! +\!\phi(t) \Big\}^{-1}.
\end{split}
\end{equation}

\item[\emph{$\bullet$}] The optimization problem for finding the optimal transmit
network-wide beamformer ${{\bf{v}}_{k}(t)}$ under the fixed
${w_k(t)}$ and ${u_k(t)}$ is

\begin{equation}
\label{eq:II21}
\begin{split}
\mathop {\min }\limits_{\{ {{\bf{v}}_k(t)}\} } \quad & \sum\limits_{k = 1}^{{K_R}} {\bf{v}}_k^H(t)\Big( \sum\limits_{j = 1}^{{K_R}} {Y_j(t)}{w_j(t)}u_j^H(t){\bf{h}}_j(t)\\
&{{\bf{h}}_j^H(t)}u_j(t) + \sum\limits_{n = 1}^N {{X_n(t)}{\mathbf{{D}}}_n^H{{\mathbf{{D}}}_n}} \Big){{\bf{v}}_k(t)} \\
& - 2\sum\limits_{k = 1}^{{K_R}} {{Y_k(t)}{w_k(t)}
{\bf{Re}}\{{u_k(t)}{{\bf{h}}_k^H(t)}{{\bf{v}}_k(t)}\}} \\
s.t. \quad & C3, C4, C6.
\end{split}
\end{equation}

The optimization objective function in (\ref{eq:II21}) is a
quadratically constrained quadratic programming (QCQP) problem and
can be solved using a standard convex optimization solver such as
the Matlab software for disciplined convex programming (CVX)\textcolor[rgb]{1.00,0.00,0.00}{\cite{CVX}}.
\end{itemize}

The above proposed WMMSE approach for solving the original optimization
problem (\ref{eq:I7}) can be summarized in \textbf{Algorithm
\ref{alg:1}}.

{\color{red}
\begin{algorithm}[htb]
\caption{Averaged weighted EE maximization with per-RRH power
and interference constraints at time slot $t$.}\label{alg:1}
\begin{algorithmic}[1]
\REQUIRE {\color{red}Initial network-wide beamforming vector ${\bf{v}}_k(t)$ and corresponding
$\eta_{EE}(t)$, and the precision $\kappa$;}
\ENSURE Calculate the optimal ${\bf{v}}_k^*(t)$ and corresponding
$\eta_{EE}^*(t)$.
\REPEAT
\STATE {\color{red}Update ${\bf{v}}_k^*(t)={\bf{v}}_k(t)$ and ${\eta}^*_{EE}(t)={\eta}_{EE}(t)$;}
\STATE With ${\bf{v}}_k(t)$ fixed, compute
the MMSE receiver $u_k(t)$ and the corresponding MSE $e_k(t)$
according to (\ref{eq:II22}) and (\ref{eq:II19});
\STATE Update the
MSE weight $w_k(t)$ according to (\ref{eq:II20});
\STATE Find the
optimal transmit network-wide beamformer ${\bf{v}}_k(t)$ under fixed
$u_k(t)$ and $w_k(t)$, by solving the QCQP problem (\ref{eq:II21});
\STATE Compute the achievable rate $R_k(t)$ and energy consumption
$P_n(t)$ according to (\ref{eq:I2}) and (\ref{eq:I5}), respectively;
\STATE Compute the EE function ${\eta}_{EE}(t)$; \STATE {\color{red}Update
$\beta_{n}^{k}(t)$ and $\tilde{R}_k(t)$. }
\UNTIL {\color{red} $|\eta_{EE}^*(t)-\eta_{EE}(t)|\leq \kappa|\eta_{EE}^*(t)|$.}
\end{algorithmic}
\end{algorithm}
}

{\color{red}
Note that \textbf{Algorithm 1} cannot be guaranteed to converge to the global optimum, while it can quickly converge to a local optimum. The random initial point can approach a local optimum through \textbf{Algorithm 1} with a substantial number of iterations. To decrease the number of iterations and approach a local optimal solution quickly, it
is critical to choose proper initialization points with reasonable approaches such as the interference
alignment initialization proposed in \cite{III:R15}.
}

Remark 2: \textbf{Algorithm 1} is based on the block
coordinate descent method. In this case, the computational complexity
of Step 2 in \textbf{Algorithm 1} is $O(K_R^2NL_R)$,
mainly due to the receive covariance matrix computation in
(\ref{eq:II19}) and (\ref{eq:II22}). With the MSE $e_k(t)$ obtained from Step 2, the
additional computational complexity for Step 3 to update
all MSE weights $w_k(t)$ is only $O(K)$. Step 4 requires
solution of the QCQP problem, which is the largest part of the computational complexity in \textbf{Algorithm 1}.
The total number of variables in the QCQP problem is
$K_RL_RN$ and the computation complexity of using the CVX
method to solve such an QCQP problem is approximately
$O((K_RL_RN)^{3.5})$. Step 5 has the same computational complexity as computing the MSE, and the
computational complexity of the remaining steps is $O(K)$.

It is noted that the complexity of solving such a
QCQP problem is related to the number of potential transmit antennas
${L_R}N$ serving each user and the total number of users $K_R$ to be
considered in each iteration. Thus, to improve the efficiency of
\textbf{Algorithm 1} in each iteration, it is practical to
reduce the number of potential transmit antennas ${L_R}N$ and the
total number of users $K_R$. This can be done by iteratively removing the $n$-th
RRH from the $k$-th RUE's candidate cluster once the transmit power
from the $n$-th RRH to the $k$-th RUE, i.e.,
${{\bf{v}}_{n,k}^H{{\bf{v}}_{n,k}}}$, is below a certain threshold,
or checking the achievable RUE rate $R_k$ and energy consumption
$P_n$ iteratively and ignoring those RUEs with negligible rates
during the next iteration. Such solutions can reduce the number
of needed iterations and decrease the complexity of
\textbf{Algorithm 1}.

\section{Performance Analysis}

In this section, the average performance of the weighted EE utility
function and the queue length bound achieved by the proposed
\textbf{Algorithm \ref{alg:1}} are introduced, which leads to an
EE-delay tradeoff.

Before presenting the theoretical performance of the proposed
\textbf{Algorithm \ref{alg:1}}, we introduce the following
assumption under the given channel condition and
network-wide beamformer design algorithm:
\begin{align}
&R_k(t) \le {R_k^{max}}, \mathbb{E}\{R_k^2(t)\} \le ({R_k^{max}})^2,\label{eq:V0}\\
&\mathbb{E}{\{P_n(t)-P_n^{av}\}} \le \gamma,\label{eq:V1}\\
&\mathbb{E}{\{\eta_{EE}(t)\}} \le {\eta_{EE}^{max}},\label{eq:V2}
\end{align}
where $R_k^{max}$, $\gamma$ and $\eta_{EE}^{max}$ are finite
constants. The assumptions (\ref{eq:V0}) and (\ref{eq:V1}) are
reasonable in realistic systems since the data rate of each RUE and
the power allocation of each RRH are constrained. The assumption
(\ref{eq:V2}) strictly depends on the boundedness assumptions of
(\ref{eq:V0}) and the power constraints.

\begin{Definition}
\textit{The network capacity region is the set ${\Lambda}$ of
non-negative rate vectors $\boldsymbol {\lambda}$, satisfying:}
\begin{equation}
\label{eq:IV23} {\epsilon}_{max}({\boldsymbol {\lambda}}) \geq 0 ,
\end{equation}
\textit{where ${\epsilon}_{max} ({\boldsymbol {\lambda}})$ is the maximum value of ${\epsilon}$ which satisfies ${\lambda}_k+{\epsilon} \le \overline {R}_k,k \in {\cal {K}}_R$.}
\end{Definition}

\begin{lemma} Suppose that $\boldsymbol{\lambda}$ is strictly interior to the capacity
 region $\Lambda $, and let $\epsilon$ be a positive value such that
 $\epsilon < \epsilon_{max}(\boldsymbol{\lambda})$. If constraints are feasible,
 then for any $\theta >0 $, there exists an algorithm that makes independent,
 stationary and randomized decisions about the network-wide beamformer at each time slot based only
 on the observed network state, which satisfies
\begin{equation}
\label{eq:IV24}
\begin{split}
\mathbb{E}\left\{ {{A_k}(t) - R_k^*(t)|{\bf{\Theta }}(t)} \right\} &= \mathbb{E}\left\{ {{A_k}(t) - R_k^*(t)} \right\} \le - \epsilon, \\
\mathbb{E}\left\{ {{P_n}(t) - P_n^{av}|{\bf{\Theta }}(t)} \right\} &\le \theta,\\
\mathbb{E}\left\{ {{\eta}_{EE}^*(t)|{\bf{\Theta }}(t)} \right\} &\ge
{\eta}_{EE}^{opt} - \theta,
\end{split}
\end{equation}
where ${\eta}_{EE}^*(t)$, $y_n^*(t)$ and $R_m^*(t)$ are
corresponding results under the stationary algorithm, and
${\eta}_{EE}^{opt}$ is the theoretical optimal value of
$\overline{\eta_{EE}}$ under constraints \emph{C1}, \emph{C2},
\emph{C3}, \emph{C4} and \emph{C5} in (\ref{eq:I7}).
\end{lemma}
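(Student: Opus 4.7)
The plan is to invoke the standard stationary-randomized-policy existence result from Neely's Lyapunov-optimization framework, adapted to the EE objective and the instantaneous constraints $C3$--$C5$ of problem (\ref{eq:I7}). The key observation is that $\eta_{EE}(t)$ defined in (\ref{eq:I6}) is an affine combination of $R_k(t)$ and $P_n(t)$, so the attainable long-term averages of $(R_k,\,P_n,\,\eta_{EE})$ live in the same convex region, and the argument reduces to showing that a channel-only randomized policy can simultaneously match the three conditions with arbitrarily small slack.

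First, I would characterize the per-slot achievable reward set. For each realization of the composite channel state $\mathbf{H}(t)=\{\mathbf{h}_k(t),\mathbf{g}_k(t),\mathbf{g}_{0,k}(t)\}$, let $\mathcal{R}(\mathbf{H}(t))$ denote the set of tuples $(R_1,\ldots,R_{K_R},P_1,\ldots,P_N)$ realizable under some beamformer $\{\mathbf{v}_k(t)\}$ satisfying $C3$--$C5$. This set is bounded because $P_n(t)\le P_n^{\max}$ and $R_k(t)$ is then bounded via (\ref{eq:V0}). Taking expectation of $\mathrm{conv}(\mathcal{R}(\mathbf{H}(t)))$ over the channel distribution yields the set of long-term averages achievable by policies that depend only on the current channel state; by Caratheodory's theorem each interior point of this set is attained by randomizing over at most $K_R+N+2$ deterministic beamformer choices per channel realization.

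Second, I would combine the strict interiority hypothesis with this characterization. Since $\boldsymbol{\lambda}$ is strictly interior to $\Lambda$ and $\epsilon<\epsilon_{\max}(\boldsymbol{\lambda})$, Definition~4 guarantees existence of a stationary channel-only policy $\pi_1$ with $\mathbb{E}\{R_k^{\pi_1}(t)\}\ge\lambda_k+\epsilon$ and $\mathbb{E}\{P_n^{\pi_1}(t)\}\le P_n^{av}$. A second policy $\pi_2$ constructed to maximize the long-term weighted EE utility over the same feasible region attains $\mathbb{E}\{\eta_{EE}^{\pi_2}(t)\}=\eta_{EE}^{opt}$ (the supremum is achieved because the feasible region is compact under the assumptions (\ref{eq:V0})--(\ref{eq:V2})). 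Because strict interiority also yields a ``safe'' policy $\pi_0$ with $\mathbb{E}\{P_n^{\pi_0}(t)\}< P_n^{av}$ strictly and rates satisfying the arrival constraints with positive slack, I would convex-combine $\pi_2$ with $\pi_0$ (and possibly $\pi_1$) using a small weight $\delta=\delta(\theta)$ to produce a single channel-only randomized policy $\pi^*$ satisfying
\begin{equation}
\mathbb{E}\{R_k^*(t)\}\ge\lambda_k+\epsilon,\quad \mathbb{E}\{P_n^*(t)\}\le P_n^{av}+\theta,\quad \mathbb{E}\{\eta_{EE}^*(t)\}\ge\eta_{EE}^{opt}-\theta,
\end{equation}
where continuity of the affine functionals $R_k$, $P_n$ and $\eta_{EE}$ in the mixing weight lets us shrink the EE and power perturbations below $\theta$ simultaneously.

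Third, the conditional statements follow from the channel-only structure. Because $\pi^*$ chooses $\mathbf{v}_k(t)$ as a function of $\mathbf{H}(t)$ and independent auxiliary randomness, and the channel and arrival processes are i.i.d.\ across slots and independent of the queue history $\boldsymbol{\Theta}(t)$, each unconditional expectation in the display above equals its conditional counterpart given $\boldsymbol{\Theta}(t)$, yielding the three claims of the lemma. The main technical obstacle will be justifying compactness of the long-term achievable region under the non-convex instantaneous feasible set carved out by $C3$--$C5$ (in particular the indicator-based fronthaul constraint of $C5$): once the Caratheodory step is in place the rest is a mechanical convex-combination argument, so the proof hinges on a clean statement that $\mathrm{conv}(\mathcal{R}(\mathbf{H}))$ is closed and its channel-expectation is attained by a measurable channel-only policy.
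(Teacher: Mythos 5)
The paper does not actually prove this lemma: it is stated with the proof ``omitted for simplicity'' and a pointer to Neely's monograph, so there is no in-paper argument to compare against. Your reconstruction of the standard $\omega$-only (channel-only) policy argument --- per-slot achievable set under $C3$--$C5$, convex hull plus Caratheodory, expectation over the i.i.d.\ channel state, and the observation that a channel-only policy makes the conditional expectations given ${\bf{\Theta }}(t)$ coincide with the unconditional ones --- is the right architecture and matches the source the paper leans on. One of your own worries is easily dispatched: attainment of the supremum by $\pi_2$ is never needed, since the lemma only asks for $\eta_{EE}^{opt}-\theta$, so a policy within $\theta/2$ of the supremum suffices and closedness of the expected convex hull need not be established.

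The genuine gap is in your mixing step. To push the mixture's rate up to the \emph{fixed} slack $\lambda_k+\epsilon$ demanded by the first line of (\ref{eq:IV24}), the weight $\delta$ on the slack-producing policy $\pi_0$ (or $\pi_1$) must be bounded below by roughly $\epsilon/\epsilon_{\max}(\boldsymbol{\lambda})$, because the near-optimal policy $\pi_2$ is only guaranteed to support $\lambda_k$ (an EE-optimal policy will typically transmit no faster than stability requires). A mixing weight bounded away from zero inflicts a utility loss on the order of $\delta\left(\eta_{EE}^{opt}-\eta_{EE}^{\pi_0}\right)$ that cannot be driven below an arbitrary $\theta$; your appeal to continuity in the mixing weight silently assumes $\delta=\delta(\theta)\to 0$, which destroys the fixed $\epsilon$ rate slack. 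This tension is real and is inherited from the lemma as stated. The standard resolutions are either to replace $\eta_{EE}^{opt}$ by the optimum $\eta_{EE}^{opt}(\epsilon)$ of the $\epsilon$-tightened problem (noting it converges to $\eta_{EE}^{opt}$ as $\epsilon\to 0$), or to invoke two distinct $\omega$-only policies --- a near-optimal one to prove Theorem 2 and an $\epsilon$-slack one to prove Theorem 3, the latter's utility then being bounded crudely by $\eta_{EE}^{max}$, which is precisely why the term $\eta_{EE}^{max}-\eta_{EE}^{opt}$ appears in (\ref{eq:IV30}). As written, your single-policy construction does not deliver all three inequalities of (\ref{eq:IV24}) simultaneously, and you should either adopt one of these two fixes or add a continuity hypothesis on the optimal utility as a function of the slack.
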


The detailed proof for \emph{Lemma 3} are omitted for
simplicity as a similar proof can be found in\textcolor[rgb]{1.00,0.00,0.00}{\cite{II:R1}}.

\vspace{-10pt}
\subsection{Stability of Queues}

In Section III, we proposed a beamformer design algorithm utilizing
Lyapunov optimization technique, and the constraints $C1$, $C2$ are
incorporated into the process of Lyapunov optimization. \emph{Theorem 1}
shows that constraints $C1$, $C2$ are guaranteed under the proposed \textbf{Algorithm \ref{alg:1}}.

\begin{theorem}
Suppose that $\mathbb{E}\{L(\boldsymbol{\Theta}(0))\}<\infty$ and the multimedia traffic arrival
rate $\boldsymbol{\lambda}$ is in within the network capacity region, then under the proposed
beamformer design algorithm, all the actual queues and virtual queues are mean-rate stable.
\end{theorem}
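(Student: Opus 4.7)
The plan is to carry out a standard Lyapunov drift argument, chaining together the drift-plus-penalty inequality of Lemma 2, the existence guarantee of Lemma 3, and the $C$-additive approximation property (Definition 3) that Algorithm 1 satisfies. The goal is to show that the expected value of the Lyapunov function $L(\boldsymbol{\Theta}(t))$ grows at most linearly in $t$, which, by positivity of each squared queue term, forces $\mathbb{E}\{|Q_k(t)|\}/t$ and $\mathbb{E}\{|H_n(t)|\}/t$ to vanish.

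First I would start from the conditional inequality in Lemma 2 for the particular actions chosen by Algorithm 1. Because Algorithm 1 is a $C$-additive approximation of the minimizer of the drift-plus-penalty, its conditional expected drift-plus-penalty is at most $C$ larger than that of any alternative control policy. I would then plug in the stationary randomized policy guaranteed by Lemma 3, which attains $\mathbb{E}\{A_k(t)-R_k^*(t)\mid\boldsymbol{\Theta}(t)\}\le -\epsilon$, $\mathbb{E}\{P_n(t)-P_n^{avg}\mid\boldsymbol{\Theta}(t)\}\le \theta$ and $\mathbb{E}\{\eta_{EE}^*(t)\mid\boldsymbol{\Theta}(t)\}\ge \eta_{EE}^{opt}-\theta$, and take $\theta\downarrow 0$. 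This substitution yields a bound of the form
\begin{equation}
\Delta(\boldsymbol{\Theta}(t)) - V\mathbb{E}\{\eta_{EE}(t)\mid\boldsymbol{\Theta}(t)\}
\le B + C - V\eta_{EE}^{opt} - \epsilon\sum_{k=1}^{K_R} Q_k(t).
\end{equation}

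Next I would take the unconditional expectation, telescope the drift from $t=0$ to $t=T-1$, and use the boundedness assumption $\mathbb{E}\{\eta_{EE}(t)\}\le \eta_{EE}^{max}$ to move the penalty term to the right-hand side. Dropping the nonnegative term $\epsilon\sum_t\sum_k\mathbb{E}\{Q_k(t)\}$ produces a linear-in-$T$ upper bound
\begin{equation}
\mathbb{E}\{L(\boldsymbol{\Theta}(T))\}
\le \mathbb{E}\{L(\boldsymbol{\Theta}(0))\} + T\bigl(B+C + V(\eta_{EE}^{max}-\eta_{EE}^{opt})\bigr).
\end{equation}
Since $Q_k^2(T)\le 2L(\boldsymbol{\Theta}(T))$ and $H_n^2(T)\le 2L(\boldsymbol{\Theta}(T))$, Jensen's inequality gives $\mathbb{E}\{|Q_k(T)|\}\le \sqrt{2\mathbb{E}\{L(\boldsymbol{\Theta}(T))\}}$ and analogously for $H_n(T)$. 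Dividing by $T$ and letting $T\to\infty$, the right-hand side behaves like $\mathcal{O}(1/\sqrt{T})$ and tends to zero, which is exactly the mean-rate stability condition in Definition~1 for every actual queue $Q_k$ and every virtual queue $H_n$.

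The main obstacle, and the only point requiring real care, is the chaining argument that justifies replacing the drift-plus-penalty of Algorithm 1 with that of the randomized reference policy of Lemma 3: one must ensure that the $C$-additive slackness of Definition 3 is correctly accounted for, and that the residual terms involving $H_n(t)$ in the Lemma 2 bound indeed disappear when the reference policy is chosen to satisfy the average power constraint with equality (i.e.\ $\theta\to 0$). Once this substitution is justified, the remainder is a mechanical telescoping plus Jensen's inequality, and the feasibility assumption $\boldsymbol{\lambda}\in\Lambda$ together with $\mathbb{E}\{L(\boldsymbol{\Theta}(0))\}<\infty$ ensures that the right-hand side of the final bound is finite for every $T$.
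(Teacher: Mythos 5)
Your proposal is correct and follows essentially the same route as the paper's Appendix B: invoke the $C$-additive approximation to compare Algorithm 1's drift-plus-penalty against the stationary randomized policy of Lemma 3, let $\theta\to 0$, telescope the drift over $t=0,\dots,T-1$ using $\mathbb{E}\{\eta_{EE}(t)\}\le\eta_{EE}^{max}$, obtain a linear-in-$T$ bound on $\mathbb{E}\{L(\boldsymbol{\Theta}(T))\}$, and conclude via Jensen's inequality that $\mathbb{E}\{|Q_k(T)|\}=\mathcal{O}(\sqrt{T})$, hence mean-rate stability. Your treatment of the virtual queues through $H_n^2(T)\le 2L(\boldsymbol{\Theta}(T))$ is slightly more explicit than the paper's closing remark that a ``similar proof'' applies to $H_n(t)$, but the argument is the same.
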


\begin{IEEEproof}
See Appendix B.
\end{IEEEproof}

\emph{Theorem 1} shows that constraints $C1$ and $C2$ are satisfied under \textbf{Algorithm \ref{alg:1}}
according to \emph{Definition 1} and \emph{Lemma 1}, respectively.

\vspace{-10pt}
\subsection{Average weighted EE Performance}

The average weighted EE performance obtained by \textbf{Algorithm \ref{alg:1}} is given by \emph{Theorem 2}.

\begin{theorem}Utilizing the proposed optimization solution, for any $V>0$, the gap
between the average weighted EE and the optimum is within
$\mathcal{O}(1/V)$:
\begin{equation}
\label{eq:IV25}
\overline {{\eta}_{EE}} \ge {\eta}_{EE}^{opt} - \frac{B+C}{V},
\end{equation}
\end{theorem}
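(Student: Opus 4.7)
The plan is to follow the standard drift-plus-penalty argument from Lyapunov optimization, leveraging the bound already established in \emph{Lemma 2} together with the existence of a near-optimal stationary randomized policy guaranteed by \emph{Lemma 3}. The essential idea is that since \textbf{Algorithm \ref{alg:1}} minimizes the RHS of the drift-plus-penalty bound only approximately (within an additive constant $C$, per \emph{Definition 3}), while any stationary randomized policy yields a valid upper bound on that infimum, the gap between the two must shrink as $V$ grows.

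First I would apply \emph{Lemma 2} under the action chosen by \textbf{Algorithm \ref{alg:1}}. By the \emph{C-additive approximation} property, the RHS of \eqref{eq:II12} under this action is within $C$ of the infimum achievable by any feasible policy; in particular, it is within $C$ of the corresponding RHS evaluated under the stationary randomized policy from \emph{Lemma 3}. Substituting the three expectation inequalities of \eqref{eq:IV24} into the drift-plus-penalty bound, we obtain
\begin{equation}
\Delta(\boldsymbol{\Theta}(t)) - V\mathbb{E}\{\eta_{EE}(t)|\boldsymbol{\Theta}(t)\} \le B + C - V(\eta_{EE}^{opt} - \theta) + \theta\sum_{n=1}^{N} H_n(t) - \epsilon\sum_{k=1}^{K_R} Q_k(t).
\end{equation}
Letting $\theta \to 0$ and observing that the last two terms are non-positive (with $\epsilon > 0$), this simplifies to
\begin{equation}
\Delta(\boldsymbol{\Theta}(t)) - V\mathbb{E}\{\eta_{EE}(t)|\boldsymbol{\Theta}(t)\} \le B + C - V\eta_{EE}^{opt}.
\end{equation}

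Next I would take the total expectation (removing the conditioning on $\boldsymbol{\Theta}(t)$), use the telescoping structure of $\Delta(\boldsymbol{\Theta}(t)) = \mathbb{E}\{L(\boldsymbol{\Theta}(t+1)) - L(\boldsymbol{\Theta}(t))\}$, and sum from $t=0$ to $T-1$. Dividing by $VT$ and invoking the non-negativity of the Lyapunov function $L(\boldsymbol{\Theta}(T)) \ge 0$ together with the finiteness of $\mathbb{E}\{L(\boldsymbol{\Theta}(0))\}$ yields
\begin{equation}
\frac{1}{T}\sum_{t=0}^{T-1}\mathbb{E}\{\eta_{EE}(t)\} \ge \eta_{EE}^{opt} - \frac{B+C}{V} - \frac{\mathbb{E}\{L(\boldsymbol{\Theta}(0))\}}{VT}.
\end{equation}
Finally, passing $T \to \infty$ annihilates the last term and reproduces the claimed bound $\overline{\eta_{EE}} \ge \eta_{EE}^{opt} - (B+C)/V$.

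The main obstacle is the first step: justifying that the $C$-additive approximation of the drift-plus-penalty, which is established only for the original non-convex objective in \eqref{eq:II16}, remains valid after the convex approximation (indicator-to-reweighted-$\ell_1$, fixed $\tilde{R}_k(t)$, and the WMMSE reformulation) used inside \textbf{Algorithm \ref{alg:1}}. One must argue that the local optimum reached by the block coordinate descent still lies within an additive constant $C$ of the true infimum, which is precisely what allows the comparison against the stationary randomized policy of \emph{Lemma 3}. Once this is granted, the remainder of the argument is routine algebra on telescoping sums, standard in the Lyapunov optimization literature.
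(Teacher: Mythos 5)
Your proposal is correct and follows essentially the same route as the paper: Appendix C of the paper likewise starts from the drift-plus-penalty bound of \emph{Lemma 2}, invokes the \emph{C-additive approximation} against the stationary randomized policy of \emph{Lemma 3}, substitutes \eqref{eq:IV24} with $\theta\to 0$, telescopes the drift over $t=0,\dots,T-1$, drops the non-negative $\epsilon\sum_k Q_k(t)$ term, divides by $VT$, and lets $T\to\infty$. Your closing caveat about whether the $C$-additive property survives the reweighted-$\ell_1$/WMMSE approximations is a fair observation, but the paper makes the same implicit assumption, so it does not distinguish the two arguments.
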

where $C$ is a constant gap between the local optimum obtained by the
proposed dynamic network-wide beamformer design algorithm and the
infimum of the R.H.S of (\ref{eq:II12}).

\begin{IEEEproof}
See Appendix C.
\end{IEEEproof}

As the optimal solution of the optimization problem (\ref{eq:I7}) with different non-trivial constraints is difficult to obtain in
practice. \emph{Theorem 2} suggests that an near-to-optimal solution can be obtained arbitrarily close to the optimum
by adjusting the control parameter $V$. That is, if $V$ is sufficiently large, the
average weighted EE performance can be pushed arbitrarily close to
the optimum, which is more realistic than attempting to achieve the optimal with high complexity.

\vspace{-10pt}
\subsection{Queueing Bounds}

To evaluate the constant queuing bound of the average queue length
for the proposed dynamic network-wide beamformer design algorithm,
the following theorem can be used.

\begin{theorem} Assume that the network-wide beamformer of each RRH and the queue dynamics are determined by
the proposed dynamic \textbf{Algorithm \ref{alg:1}}. Then, the average
queue bound satisfies
\begin{equation}
\label{eq:IV30}
\overline Q \! = \! \mathop {\lim }\limits_{T \to \infty } \frac{1}{T}\sum\limits_{t = 0}^{T-1} {\sum\limits_{k = 1}^{{K_R}} {\mathbb{E}\left\{ {{Q_k}(t )} \right\}} } \le \frac{{B \!+ \! C \! + \!V \left( { {\eta}_{EE}^{max} }\!-\!{\eta}_{EE}^{opt} \right)}}{\epsilon }.
\end{equation}
\end{theorem}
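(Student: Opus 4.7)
The plan is to chain together Lemma~2 (the drift-plus-penalty bound), Lemma~3 (the existence of a suitable stationary randomized policy), and Definition~3 (the $C$-additive approximation property of Algorithm~1), and then apply a standard telescoping-sum argument to extract the time-averaged queue bound. I expect the structure to closely mirror the proof of Theorem~2 (which presumably lives in Appendix~C), but with the roles of the queue term and the penalty term swapped: there we threw away the $-\epsilon\sum_k Q_k(t)$ slack to recover an EE bound, and here I will instead throw away the EE-gap term to recover a queue bound.

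First I would write down the one-slot inequality from Lemma~2, evaluated under the decisions produced by Algorithm~1. Because Algorithm~1 is a $C$-additive approximation of the drift-plus-penalty rule, its conditional expected right-hand side exceeds the infimum over all feasible policies by at most $C$. I would then upper-bound that infimum by plugging in the stationary randomized policy provided by Lemma~3 with $\theta\to 0$: under that policy the $H_n(t)$ term contributes at most $0$, the $Q_k(t)$ term contributes at most $-\epsilon\sum_k Q_k(t)$, and the penalty term contributes at most $-V\eta_{EE}^{opt}$. Combining, one obtains
\begin{equation*}
\Delta(\boldsymbol{\Theta}(t)) - V\,\mathbb{E}\{\eta_{EE}(t)\mid\boldsymbol{\Theta}(t)\} \le B + C - V\,\eta_{EE}^{opt} - \epsilon \sum_{k=1}^{K_R} Q_k(t).
\end{equation*}

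Next I would use the boundedness assumption \eqref{eq:V2}, namely $\mathbb{E}\{\eta_{EE}(t)\}\le \eta_{EE}^{max}$, to move the penalty term to the right-hand side, giving
\begin{equation*}
\Delta(\boldsymbol{\Theta}(t)) \le B + C + V(\eta_{EE}^{max}-\eta_{EE}^{opt}) - \epsilon\sum_{k=1}^{K_R} Q_k(t).
\end{equation*}
Taking total expectation, summing over $t=0,\dots,T-1$, and telescoping yields
\begin{equation*}
\mathbb{E}\{L(\boldsymbol{\Theta}(T))\} - \mathbb{E}\{L(\boldsymbol{\Theta}(0))\} \le T\bigl(B+C+V(\eta_{EE}^{max}-\eta_{EE}^{opt})\bigr) - \epsilon \sum_{t=0}^{T-1}\sum_{k=1}^{K_R} \mathbb{E}\{Q_k(t)\}.
\end{equation*}
Since $L(\boldsymbol{\Theta}(T))\ge 0$ and $\mathbb{E}\{L(\boldsymbol{\Theta}(0))\}<\infty$ by assumption, rearranging, dividing by $\epsilon T$, and letting $T\to\infty$ kills the initial Lyapunov term and delivers exactly the claimed bound on $\overline{Q}$.

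The only real obstacle I anticipate is justifying, rigorously and compactly, the step where the C-additive approximation is compared against the stationary randomized policy: one must argue that such a policy is a feasible control action at every slot (so that the infimum in Definition~3 is no larger than its conditional expectation), then take $\theta\downarrow 0$ in the three bounds of Lemma~3 while keeping $\epsilon$ fixed and strictly less than $\epsilon_{max}(\boldsymbol{\lambda})$. Everything else — the telescoping, the nonnegativity of $L$, and discarding $\mathbb{E}\{L(\boldsymbol{\Theta}(0))\}/T\to 0$ — is routine and parallels the standard Neely-type queue-backlog arguments already cited in the paper.
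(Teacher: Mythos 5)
Your proposal is correct and follows essentially the same route as the paper's Appendix D: both start from the per-slot drift-plus-penalty bound obtained by combining the $C$-additive approximation with the stationary randomized policy of Lemma~3 as $\theta\to 0$, telescope the Lyapunov function over $t=0,\dots,T-1$, bound the accumulated penalty term by $VT\eta_{EE}^{max}$ via assumption (\ref{eq:V2}), and divide by $\epsilon T$ before letting $T\to\infty$. The only cosmetic difference is that you apply the $\eta_{EE}^{max}$ bound slot-by-slot before summing, whereas the paper sums first and then bounds the aggregate; the two orderings are equivalent.
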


\begin{IEEEproof}
See Appendix D.
\end{IEEEproof}

\emph{Theorem 3} shows that the average queue length is bounded by a
deterministic upper backlog bound which increases linearly with $V$.

\begin{remark}
\emph{Theorem 2} combined with \emph{Theorem 3} show that the
proposed dynamic network-wide beamformer design algorithm achieves
an $[\mathcal{O}(1/V),\mathcal{O}(V)]$ tradeoff between the average
weighted EE performance and queue backlogs, which leads to an
EE-delay tradeoff for a given arrival rate according to Little's
Theorem\textcolor[rgb]{1.00,0.00,0.00}{\cite{IV:R14}}. With an increase of the control parameter
$V$, the achieved weighted EE performance becomes better at the cost
of incurring the larger queuing delay. Therefore, it is important to
choose a proper $V$ to obtain the required performance and QoS in
realistic H-CRANs .
\end{remark}

\section{Simulation Results and Analysis}

{\color{red}To evaluate the performance of the proposed optimal network-wide
cooperative beamformer design algorithm in H-CRANs, we consider one radio resource block for all RRHs and MBSs. When more radio resource blocks are used for each RRH and MBS, the radio resource block allocation algorithms that adapt to the time-varied radio channel and dynamic traffic arrival described in {\cite{I:A5,I:A6,I:A7}} can be directly used along with the proposal in this paper. To decrease the high complexity and reduce the simulation time, a small-scale H-CRAN system
consisting of one MBS and 2 RRHs is considered with the assumed
simulation parameters shown in Table I. Since only one radio resource block is considered, only one RUE can be served in each RRH. As a result, it is assumed that $K_R = 4$ and $K_M = 4$. Note that similar simulation results to those presented below can be achieved for a large-scale H-CRAN system consisting of more RRHs and MBSs.}

\begin{table}[h]
\centering \caption{Simulation Parameters}\vspace{-10pt}
\small
\begin{tabular}{|m{5cm}<{\centering}|m{2cm}<{\centering}|}
\hline
 Num. of (MBSs, RRHs, MUEs, RUEs) & (1, 2, 4, 4)\\
\hline
 Num. of antennas $/$(MBS, RRH) & (2, 2) \\
\hline
 Noise power spectral density & -174 dBm/Hz \\
\hline
 Path loss exponent for transmission from BS to UE & 4 \\
\hline
 Small-scale fading & Rayleigh fading \\
\hline
 Maximum transmit power of RRH & 0.22W \\
\hline
 Average transmit power constraint of RRH & 0.2 W\\
\hline
 Transmit power of MBS & 20W \\
\hline
\end{tabular}
\label{TableChannel1}
\end{table}

\vspace{-10pt}
\subsection{Queue Stability Evaluation}

To evaluate the queue stability achieved by the proposed dynamic
network-wide beamformer design algorithm, we take the user queues at
the arrival rate $\lambda=${\kern 1pt}{\kern 1pt}4.2{\kern 1pt}{\kern 1pt}bit/slot/Hz as sample queues. The user
average queue length against $t$ under several $V$ is shown in
Fig. \ref{Q_t}. It can be observed that the average queue length
first increases with $t$ and then fluctuates around certain fixed
values. Larger $V$ leads to larger stable values which directly
follows with \emph{Theorem 3}.

\begin{figure}[!h!t]
\centering
\includegraphics[width=0.5\textwidth]{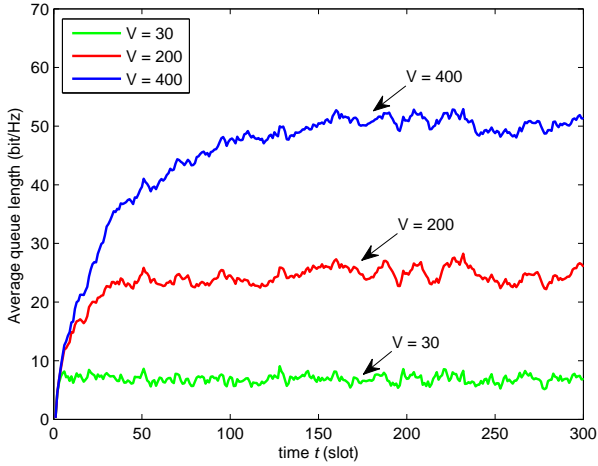}
\vspace*{-10pt} \caption{Queue length versus simulation time
length}\vspace*{-10pt} \label{Q_t}
\end{figure}

\vspace{-10pt}
\subsection{The EE-Delay Tradeoff}

The average queue backlog achieved by the proposed dynamic
network-wide beamformer design algorithm shown in Fig. \ref{Q_V}
grows linearly in $\mathcal{O}(V)$ under given multimedia traffic arrival
rate $\lambda$, which consolidates \emph{Theorem 3}. Under the same
$V$, the queue length differs when the traffic arrival rate
$\lambda$ changes, since different arrival rates cause different amounts of
power consumption. Intuitively, a larger $\lambda$ leads to higher
average power consumption since more power is required to transmit
data arrivals and avoid queuing congestion, which is supported by
Fig. \ref{P_V}.

\begin{figure}[!h!t]
\centering
\includegraphics[width=0.5\textwidth]{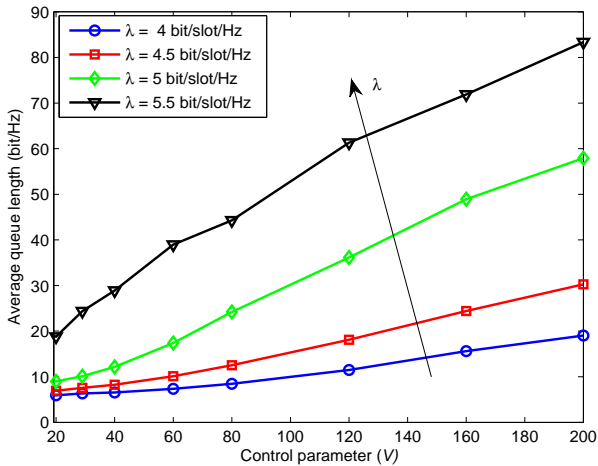}
\vspace*{-10pt} \caption{Average queue length versus
$V$}\vspace*{-10pt} \label{Q_V}
\end{figure}

Fig. \ref{P_V} is shown to evaluate the average power consumption
 as a function of $V$. It is observed that the larger the multimedia traffic arrival rate
$\lambda$, the bigger the average power consumption. This is due
to the fact that it is required for the system to consume more power
to timely transmit more multimedia traffic arrivals. Meanwhile, the
average power consumption decreases as $V$ increases for a given
$\lambda$. This can be explained by the fact that a larger $V$ implies that the system emphasizes the average weighted EE more, but an increase in transmit power does not result in a proportional increase in transmit rate due to the diminishing slope of the logarithmic rate-power function.
Therefore, it is necessary to decrease the transmit power to improve the average weighted EE performance.

\begin{figure}[!h!t]
\centering
\includegraphics[width=0.5\textwidth]{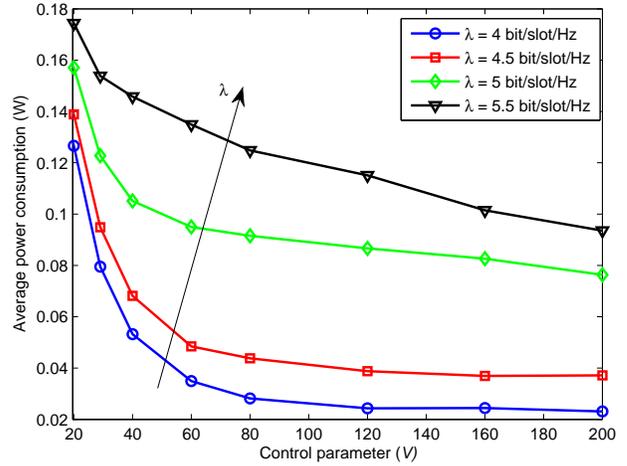}
\vspace*{-10pt} \caption{Average power consumption versus
$V$}\vspace*{-10pt} \label{P_V}
\end{figure}

In Fig. \ref{RC_V}, the average weighted EE performance versus the
parameter $V$ for different user arrival rates $\lambda$ is
evaluated to support \emph{Theorem 2}. The average weighted EE
performance increases with $V$ for any given arrival rate $\lambda$,
which can be intuitively understood by the fact that greater emphasis is placed on
the average weighted EE more when $V$ increases. The lower the multimedia traffic arrival rate $\lambda$ is, the higher the average weighted EE performance under a given control parameter $V$ will be. This happens because both transmit rate and power consumption decrease with decreasing $\lambda$ and
the the logarithmic rate-power function has the characteristic of diminishing slope .

\begin{figure}[!h!t]
\centering
\includegraphics[width=0.5\textwidth]{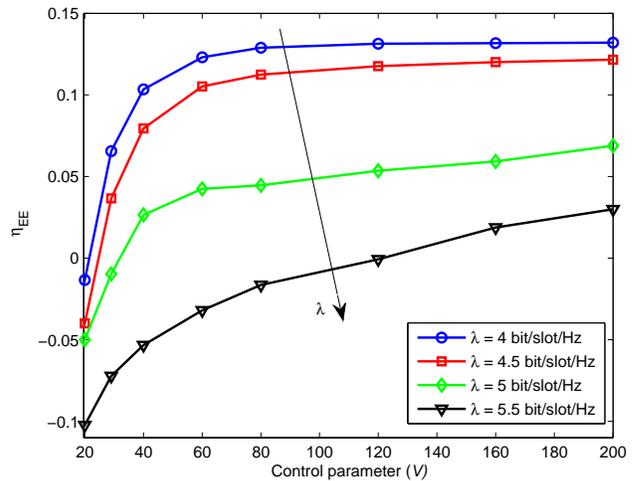}
\vspace*{-10pt} \caption{Average weighted EE performance versus
$V$}\vspace*{-10pt} \label{RC_V}
\end{figure}

To emphasize the efficiency and usefulness of the proposed
average weighted EE performance metric $\eta_{EE}$, the traditional EE
performance metric $\tilde{\eta}_{EE}$ defined in (\ref{EE_T}) is
illustrated in Fig. \ref{EE_V} as a baseline. In Fig. \ref{EE_V}, the average $\tilde{\eta}_{EE}$ grows at the rate of $O(1/V)$. The performance of $\tilde{\eta}_{EE}$ becomes better with a larger
arrival data rate $\lambda$. It can be observed that the tendency of the average performance of
$\tilde{\eta}_{EE}$ is similar to the average performance of the proposed
EE metric $\eta_{EE}$, which indicates that
the proposed EE metric $\eta_{EE}$ is valid for use as the EE measurement in H-CRANs.

\begin{figure}[!h!t]
\centering
\includegraphics[width=0.5\textwidth]{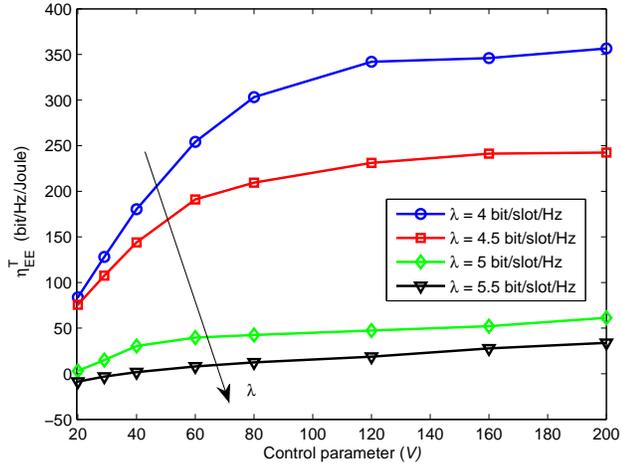}
\vspace*{-10pt} \caption{The traditional EE performance metric
versus $V$}\vspace*{-10pt} \label{EE_V}
\end{figure}

To make the tradeoff between the average weighted EE and the average
queuing delay clear, the relationship between the average
weighted EE and queuing delay is
illustrated in Fig. \ref{RC_Q} versus the parameter $V$. It can be observed that a larger $V$
leads to a better average weighted EE performance but at the cost of
incurring a larger queuing delay and vice versa. Thus, the proposed
network-wide cooperative beamformer design algorithm provides an
advanced method to flexibly balance the average weighted EE
performance and the queuing delay.

\begin{figure}[!h!t]
\centering
\includegraphics[width=0.5\textwidth]{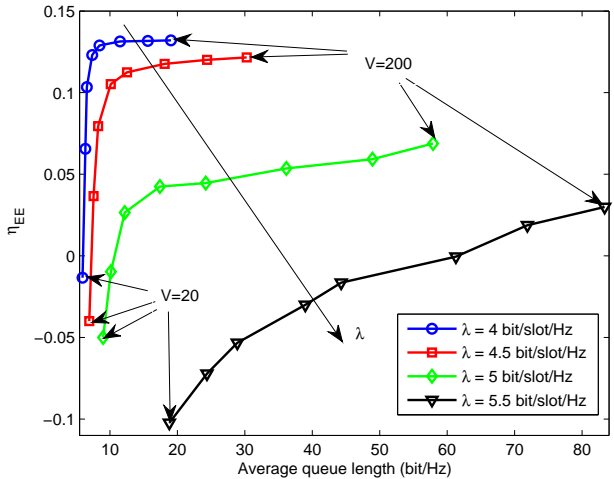}
\vspace*{-10pt} \caption{Average weighted EE performance versus
average queue length}\vspace*{-10pt} \label{RC_Q}
\end{figure}

\vspace{-10pt}
\subsection{Impact of fronthaul constraint on average weighted EE}

To evaluate the impact of the constrained fronthaul on the average
weighted EE performance in H-CRANs, we compare the proposed
network-wide cooperative beamformer design algorithm under constrained fronthual
$C_n=6$ bit/Hz with the
solution that maximizes EE with ideal fronthauls. As shown in
Fig. \ref{P_V_Fh}, compared with the fronthaul constraint under the
proposed dynamic network-wide beamformer design algorithm under the same arrival rate, the
average power consumption becomes larger if the fronthaul constraint
is not considered. This happens because the rising
average power consumption leads to an increase of
$\|{\bf{v}}_{k}\|$ in (\ref{eq:I0}).

\begin{figure}[!h!t]
\centering
\includegraphics[width=0.5\textwidth]{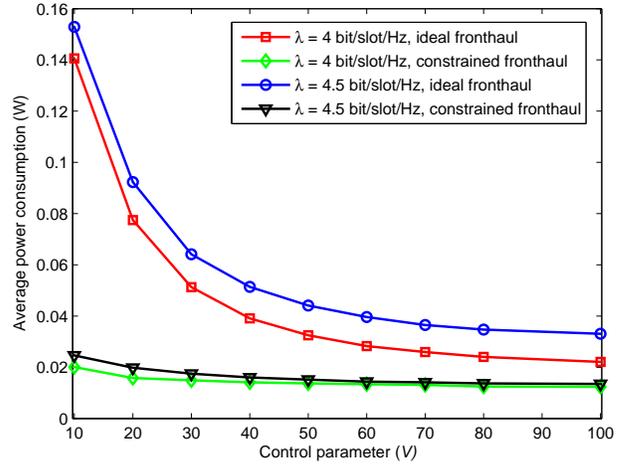}
\vspace*{-10pt} \caption{Average power consumption vs.
$V$}\vspace*{-10pt} \label{P_V_Fh}
\end{figure}

Meanwhile, under the same multimedia traffic arrival rate, the average queue
length with the fronthaul constraint is larger than without the
fronthaul constraint, which is illustrated in Fig. \ref{Q_V_Fh}. The
rational explanation is that the fronthaul constraint limits the
transmission rate and causes more congestion, which results in larger average queue length.

\begin{figure}[!h!t]
\centering
\includegraphics[width=0.5\textwidth]{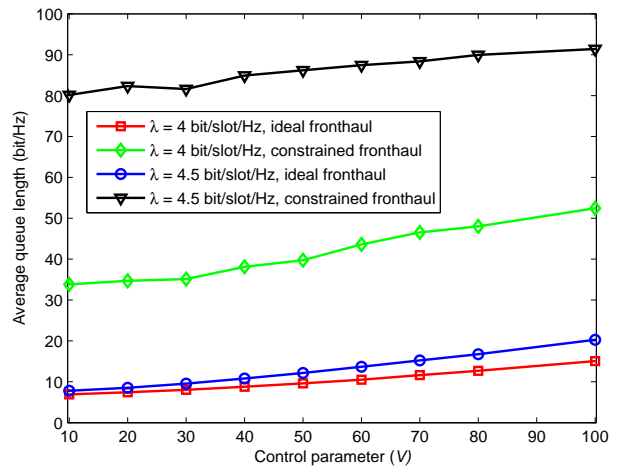}
\vspace*{-10pt} \caption{Average queue length versus
$V$}\vspace*{-10pt} \label{Q_V_Fh}
\end{figure}

The average weighted EE performance under the ideal and constrained
fronthaul are compared in Fig. \ref{RC_V_Fh}, where the average
performance of $\eta_{EE}$ under the fronthaul constraint is better
than that under the ideal fronthaul situation with the same multimedia traffic rate.
Combined with Fig. \ref{Q_V_Fh}, it can be concluded that fronthaul constraint
leads to less power consumption and better energy efficiency performance but
at the cost of incurring larger queueing delay.

\begin{figure}[!h!t]
\centering
\includegraphics[width=0.5\textwidth]{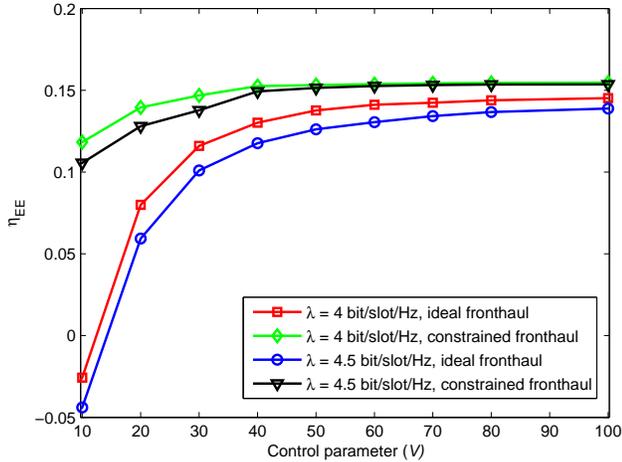}
\vspace*{-10pt} \caption{Average weighted EE performance versus
$V$}\vspace*{-10pt} \label{RC_V_Fh}
\end{figure}

\begin{remark} In a practical H-CRAN, each RUE is associated with only a small number of adjacent RRHs.
Thus, to decrease the computational and operational complexity of the simulation, we first
conduct the simulation in a small area with 2 RRHs and 4 RUEs. In fact,
when the simulated network size increases, similar simulation results are achieved though the simulation time becomes long.
To illustrate this, we conduct another simulation configuration with 8 RRHs and 16 RUEs in the same concerned area. The obtained average queue
lengths for these two simulation configurations are compared in Fig. \ref{Q_V_1}.
It can be observed that the average queue lengths under these two simulation configurations are nearly equal, both grow linearly at $O(V)$.
The average performance of $\eta_{EE}$ is compared in Fig. \ref{EE_V_1}, which grows at the same tendency
with $V$ in Fig. \ref{Q_V_1}. Therefore, we can conclude that the growth rate of the considered performance with the same simulation area are almost stable when the
simulation network size becomes large.
\begin{figure}[!h!t]
\centering
\includegraphics[width=0.5\textwidth]{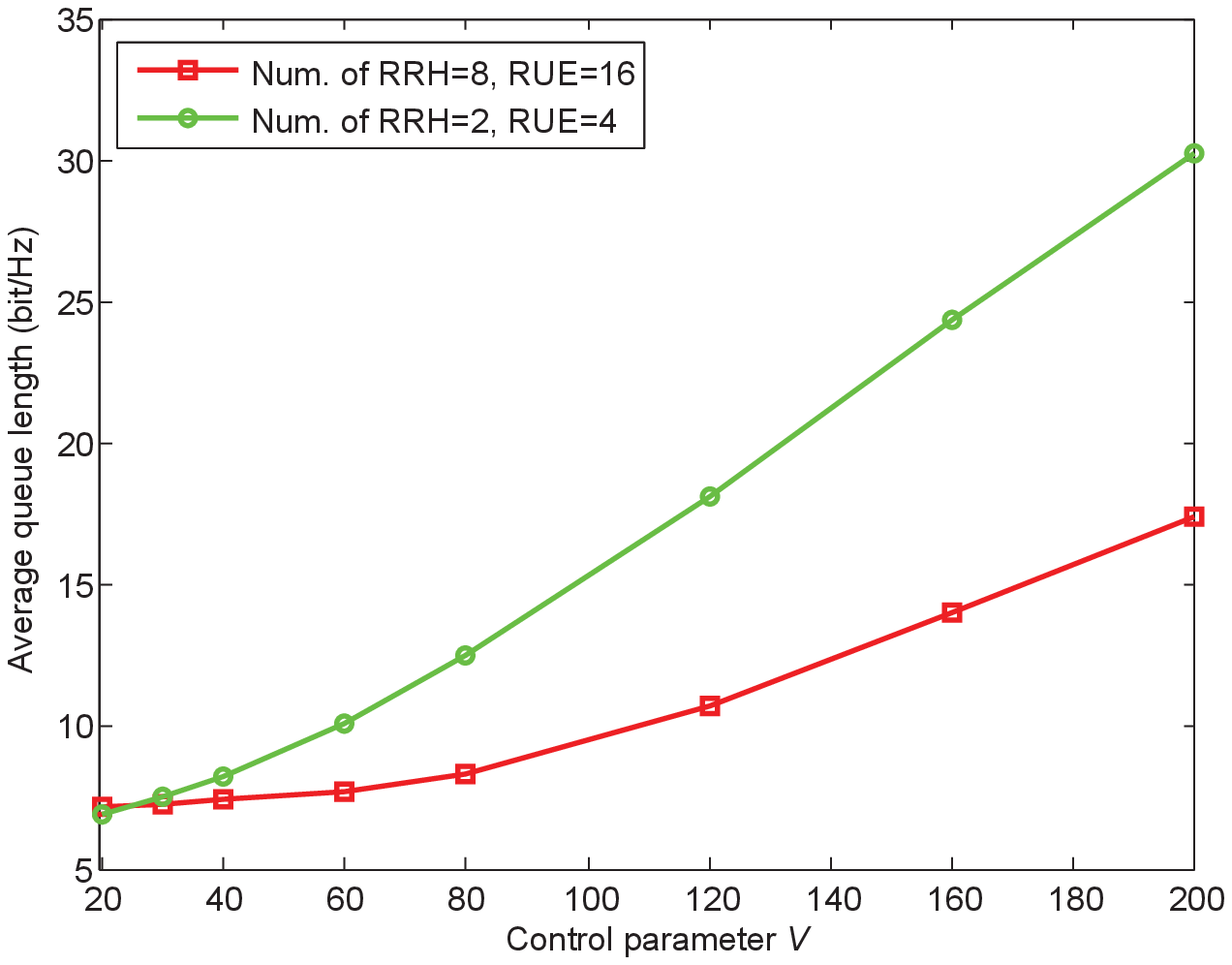}
\vspace*{-10pt} \caption{Average power consumption versus
$V$}\vspace*{-10pt} \label{Q_V_1}
\end{figure}

\begin{figure}[!h!t]
\centering
\includegraphics[width=0.5\textwidth]{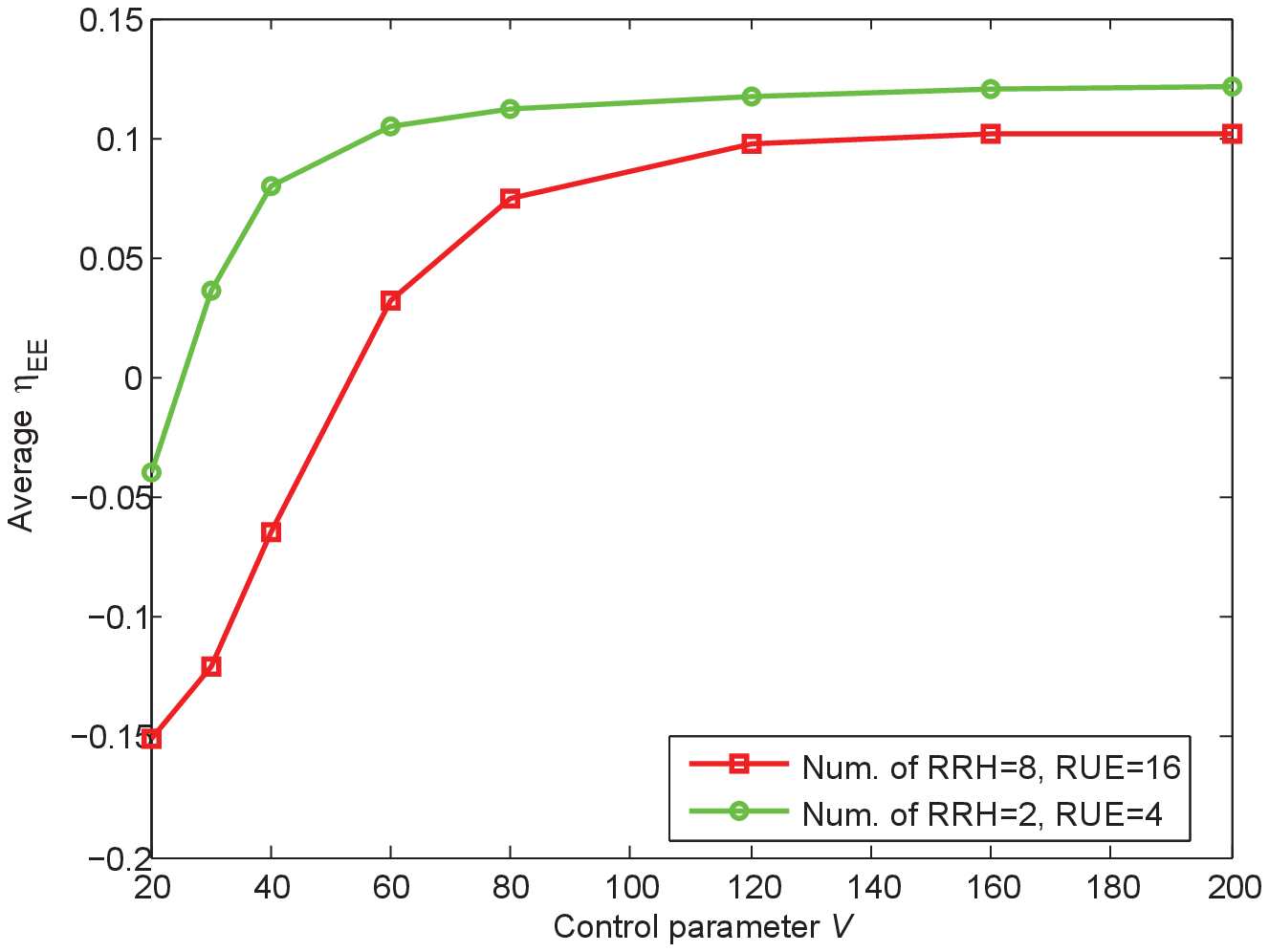}
\vspace*{-10pt} \caption{Average power consumption versus
$V$}\vspace*{-10pt} \label{EE_V_1}
\end{figure}
\end{remark}

\section{Conclusion}

In this paper, to make the average energy efficiency
arbitrarily close to the optimum and make each user's queue
stable in multimedia H-CRANs, an average weighted EE performance metric has
been proposed. Based on the advanced EE performance metric, a
dynamic network-wide beamformer design algorithm based on the
Lyapunov optimization framework has been proposed, which takes the
average and instantaneous power constraints and the interference
constraints into account. This network-wide beamformer design
algorithm can be used to solve the non-convex average weighted EE
performance optimization problem via a general weighted minimum mean
square error (WMMSE) approach. An
$[\mathcal{O}(1/V),\mathcal{O}(V)]$ EE-delay tradeoff is
finally achieved by the proposed algorithm, which is verified by
both the mathematical analysis and numerical evaluations. The
results have shown that the optimal average weighted EE performance
under varied queue lengths strictly depends on the control
parameter $V$. Furthermore, the fronthaul constraint has a
significant impact on the average weighted EE performance. In
realistic multimedia H-CRANs, the optimal $V$ should be pre-selected to
optimize the average weighted EE performance with both ideal and
constrained fronthaul under the given multimedia queuing delay configuration.
\begin{appendices}

\vspace{-10pt}
\section{PROOF OF Lemma 2}

By squaring both sides of (\ref{eq:I3}), the following inequality
can be obtained
\begin{equation}
\label{eq:APP_A1}
\begin{split}
Q_k^2(t + 1)\le &Q_k^2(t) + R_k^2(t) + A_k^2(t)- 2{Q_k}(t){R_k}(t)\\
& + 2{A_k}(t){\left\{ {{Q_k}(t) - {R_k}(t)} \right\}^ + }\\
\le &Q_k^2(t) \!+ \!R_k^2(t) \!+\! A_k^2(t) \!+\!2{Q_k}(t)\left\{ {{A_k}(t)\! -\!{R_k}(t) } \right\}.
\end{split}
\end{equation}

Summing (\ref{eq:APP_A1}) over $k \in \left\{1,2,\ldots,{K_R} \right\}$, we obtain
\begin{equation}
\label{eq:APP_A2}
\begin{split}
&\frac{1}{2}\left\{ {\sum\limits_{k = 1}^{{K_R}} {Q_k^2(t + 1)} \! -
\!\sum\limits_{k = 1}^{{K_R}} {Q_k^2(t)} } \right\} \\
\le&
\frac{1}{2}\sum\limits_{k = 1}^{{K_R}} {\left\{ {R_k^2(t) +
A_k^2(t)}\! \right\}}\!\! -\! \sum\limits_{k = 1}^{{K_R}} {{Q_k}(t)
\left\{ {{R_k}(t) \!-\! {A_k}(t)} \right\}}.
\end{split}
\end{equation}

Similarly, for virtual queues $H_n(t)$, we have
\begin{equation}
\label{eq:APP_A3}
\begin{split}
&\frac{1}{2}\left\{ {\sum\limits_{n = 1}^N {H_n^2(t + 1)} \! -
\!\sum\limits_{n = 1}^N {H_n^2(t)} } \right\} \\
\le&
\frac{1}{2}\sum\limits_{n = 1}^N {{\!\left\{ {{P_n}(t)\! -
\!\!P_n^{avg}(t)} \right\}}^2}\!\!+\!\! \sum\limits_{n = 1}^N
{{H_n}(t)\left\{ {{P_n}(t) \!-\! P_n^{avg}(t)} \!\right\}}.
\end{split}
\end{equation}

Summing (\ref{eq:APP_A2}) and (\ref{eq:APP_A3}) and and taking
expectations of both sides to yield
\begin{equation}
\label{eq:APP_A4}
\begin{split}
&\mathbb{E}\left\{ {L\left( {{\bf{\Theta }}\left( {t + 1} \right)}
\right) - L\left( {{\bf{\Theta }}\left( t \right)} {|{\bf{\Theta
}}\left( t \right)}\right)} \right\} \\
\le& \frac{1}{2}\sum\limits_{k = 1}^{{K_R}} {\mathbb{E}\left\{ {R_k^2(t) + A_k^2(t)} {|{\bf{\Theta }}\left( t \right)} \right\}}\\
&+ \sum\limits_{k = 1}^{{K_R}} {{Q_k}(t)\mathbb{E}\left\{ {
{A_k}(t)-{R_k}(t)} {|{\bf{\Theta }}\left( t \right)}\right\}} \\
&+ \frac{1}{2}\sum\limits_{n = 1}^N {\mathbb{E}{{\left\{ {{P_n}(t) - P_n^{avg}(t)} {|{\bf{\Theta }}\left( t \right)}\right\}}^2}}\\
&+\sum\limits_{n = 1}^N {{H_n}(t)\mathbb{E}\left\{ {{P_n}(t) - P_n^{avg}(t)} {|{\bf{\Theta }}\left( t \right)}\right\}}.
\end{split}
\end{equation}

Subtracting $V\mathbb{E}{\left\{ {{\eta}_{EE}(t)|{\bf{\Theta }}(t)}\right\}}$, we have
\begin{equation}
\label{eq:APP_A5}
\begin{split}
&\Delta \left( {{\bf{\Theta }}\left( t \right)}
\right) - V\mathbb{E }{\left\{ {{\eta}_{EE}(t)|{\bf{\Theta }}(t)}
\right\}} \\
\le &B - V\mathbb{E }{\left\{ {{\eta}_{EE}(t)|{\bf{\Theta }}(t)} \right\}} \\
&+\sum\limits_{k = 1}^{{K_R}} {{Q_k}(t)\mathbb{E}\left\{ {
{A_k}(t)-{R_k}(t) } {|{\bf{\Theta }}\left( t \right)}\right\}} \\
&+ \sum\limits_{n = 1}^N {{H_n}(t)\mathbb{E}\left\{ {{P_n}(t) -
P_n^{avg}(t)} {|{\bf{\Theta }}\left( t \right)}\right\}}.
\end{split}
\end{equation}
where $B$ satisfies (\ref{eq:II13}).

\vspace{-10pt}
\section{PROOF OF THEOREM 1}
Since we use a \emph{C-additive approximation algorithm}, which yields a value within a constant $C$ of the infimum of the R.H.S of (\ref{eq:II12}), it is easy to obtain the following:
\begin{equation}
\label{eq:APP_B1}
\begin{split}
& \Delta \left( {{\bf{\Theta }}(t)} \right) - V\mathbb{E} \left\{
{{\eta}_{EE}(t)|{\bf{\Theta }}(t)} \right\} \\
\le& B + C- V\mathbb{E}\left\{ {{\eta}_{EE}^*(t)|{\bf{\Theta }}(t)} \right\}\\
&+ \sum\limits_{n = 1}^N {{H_n}(t)\mathbb{E}\left\{ {P_n^*(t) -
P_n^{avg}|{\bf{\Theta }}(t)} \right\}} \\
&+ \sum\limits_{k = 1}^{{K_R}}
{{Q_k}(t)\mathbb{E}\left\{ {{A_k}(t) - R_k^*(t)|{\bf{\Theta }}(t)}
\right\}},
\end{split}
\end{equation}
where $R_k^*(t)$, $P_n^*(t)$ and ${\eta}_{EE}^*(t)$ are corresponding values for stationary algorithm referred to in \emph{Lemma 3}. Substituting (\ref{eq:IV24}) into (\ref{eq:APP_B1}) and taking the limit as $\theta \to 0$ leads to:
\begin{equation}
\label{eq:APP_B2}
\Delta \left( {{\bf{\Theta }}(t)} \right) - V\mathbb{E}\left\{ {{\eta}_{EE}(t)|{\bf{\Theta }}(t)} \right\} \le B + C - V{\eta}_{EE}^{opt} - \sum\limits_{k = 1}^{{K_R}} {\epsilon {Q_k}(t)}.
\end{equation}

{\color{red}Summing (\ref{eq:APP_B2}) over $t \in \{0,2,\cdots,T-1 \}$, we obtain
\begin{equation}
\label{eq:APP_B3}
\begin{split}
&\mathbb{E}\left\{ {L({\bf{\Theta }}(T))} \right\} - \mathbb{E}\left\{ {L({\bf{\Theta }}(0))} \right\} - V\sum\limits_{t = 0}^{T-1} {\mathbb{E}\left\{ {{\eta}_{EE}(t)|{\bf{\Theta }}(t)} \right\}} \\
 \le & (B+C)T - VT{{\eta}_{EE}^{opt}}- \sum\limits_{t = 0}^{T-1} {\sum\limits_{k = 1}^{{K_R}} {\epsilon {Q_k}(t)} }.
\end{split}
\end{equation}
}

Based on the fact that $Q_k(t)\geq 0$ for all $t$ and the assumption
(\ref{eq:V2}), we rearrange (\ref{eq:APP_B3}) and obtain
\begin{equation}
\label{eq:APP_B4}
\begin{split}
&\mathbb{E}\{Q_k^2(T)\}\\
\le & 2(B + C - V \eta_{EE}^{opt})
+ 2VT{\eta_{EE}^{max}} +2\mathbb{E}\{L(\boldsymbol{\Theta}(0))\}.
\end{split}
\end{equation}

According to the fact that $\{\mathbb{E}\{|Q_k(T)|\}\}^2 \le \mathbb{E}\{Q_k^2(T)\}$, we have
\begin{equation}
\label{eq:APP_B5}
\begin{split}
&\mathbb{E}\{|Q_k(T)|\} \! \\
\le & \sqrt{2T(B \!+ \!C \!- \!V \eta_{EE}^{opt})\! +\! 2VT{\eta_{EE}^{max}} \!+\!2\mathbb{E}\{L(\boldsymbol{\Theta}(0))\}}.
\end{split}
\end{equation}

Dividing (\ref{eq:APP_B5}) by $T$ and taking limits as $T \to \infty$
\begin{equation}
\label{eq:APP_B6}
\mathop {\lim}\limits_{T \to \infty} \frac{\mathbb{E}\{Q_k(T)\}}{T} = 0.
\end{equation}

Thus, queues are mean-rate stable from \emph{Definition 1}, which indicates that
constraint $C2$ is satisfied according to the proposed algorithm. Similar proof
can be applied to $H_n(t)$.

\vspace{-10pt}
\section{PROOF OF THEOREM 2}
{\color{red}
Based on the inequality (\ref{eq:APP_B3}) obtained in Appendix B and $\mathbb{E}\left\{L({\bf{\Theta }}(0))\right\} < \infty$,
we obtain
\begin{equation}
\label{eq:APP_C1}
\begin{split}
 V\sum\limits_{t = 0}^{T-1} {\mathbb{E}\left\{ {{\eta}_{EE}(t)|{\bf{\Theta }}(t)} \right\}}
\!\ge\! VT{{\eta}_{EE}^{opt}}\!-\!(B+C)T\! - \! \mathbb{E}\left\{ {L({\bf{\Theta }}(0))} \right\} ,
\end{split}
\end{equation}
with some non-negative terms neglected when appropriate.
}

Dividing both sides of (\ref{eq:APP_C1}) by $VT$ and taking the limit as $T \to \infty $, we obtain
 \begin{equation}
 \label{eq:APP_C2}
 \overline {{\eta}_{EE}} \ge {\kern 1pt} {{\eta}_{EE}^{opt}} - \frac{B+C}{V}.
 \end{equation}

Thus \emph{Theorem 2} is proved.

\vspace{-10pt}
\section{PROOF OF THEOREM 3}
{\color{red}
According to the fact that $\mathbb{E}\left\{L({\bf{\Theta
}}(T))\right\} < \infty$, (\ref{eq:APP_B3}) can be re-written as
\begin{equation}
\label{eq:APP_D1}
\begin{split}
\sum\limits_{t = 0}^{T-1} {\sum\limits_{k = 1}^{{K_R}} {\epsilon {Q_k}(t)} } \le& (B+C)T - VT{{\eta}_{EE}^{opt}} - \mathbb{E}\left\{ {L({\bf{\Theta }}(T))} \right\} \\
&+ \mathbb{E}\left\{ {L({\bf{\Theta }}(0))} \right\} +  V\sum\limits_{t = 0}^{T-1} {\mathbb{E}\left\{ {{\eta}_{EE}(t)|{\bf{\Theta }}(t)} \right\}}\\
\le& (B+C)T - VT{{\eta}_{EE}^{opt}}- \mathbb{E}\left\{ {L({\bf{\Theta }}(T))} \right\} \\
&+ \mathbb{E}\left\{ {L({\bf{\Theta }}(0))} \right\} + VT {{\eta}_{EE}^{max}(t)}.
\end{split}
\end{equation}
}

Dividing (\ref{eq:APP_D1}) by $\epsilon T$ and taking the limit as $T \to \infty$, the following is obtained:
\begin{equation}
\label{eq:APP_D2}
\overline Q \le \frac{{B +C + V\left( {{\eta}_{EE}^{max} } -{{\eta}_{EE}^{opt}}\right)}}{\epsilon }.
\end{equation}

This completes the proof of \emph{Theorem 3}.
\end{appendices}

\vspace{-10pt}

\end{document}